\documentclass[9pt, shortpaper,twoside,web]{ieeecolor}
\usepackage{generic}

\usepackage{amsmath,amssymb,amsfonts}
\usepackage{hyperref}
\usepackage[nocompress]{cite}
\usepackage{balance}

\usepackage[caption=false,font=footnotesize]{subfig}
\usepackage{graphicx}
\usepackage{tikz, pgfplots}
\usetikzlibrary{shapes, arrows, arrows.meta, fit, positioning, calc, 3d}
\usepgfplotslibrary{fillbetween}
\pgfplotsset{compat=1.18} 
\tikzset{>=latex}

\DeclareMathOperator*{\argmin}{arg\,min}
\newtheorem{theorem}{Theorem}
\newtheorem{lemma}{Lemma}

\newtheorem{definition}{Definition}
\newtheorem{proposition}{Proposition}
    
\markboth{\hskip25pc IEEE TRANSACTIONS ON AUTOMATIC CONTROL}
{Author \MakeLowercase{\textit{et al.}}: Title}
\begin{document}
\bstctlcite{IEEEexample:BSTcontrol}

\title{Remote State Estimation over a Wearing Channel: \\Information Freshness vs. Channel Aging}
\author{Jiping~Luo, George~Stamatakis, Osvaldo~Simeone, \IEEEmembership{Fellow, IEEE}, \\and Nikolaos~Pappas, \IEEEmembership{Senior Member, IEEE}
\thanks{The work of J.~Luo and N.~Pappas was supported by ELLIIT, the Graduate School in Computer Science (CUGS), and the European Union (6G-LEADER, 101192080). \textit{(Corresponding author: Jiping~Luo.)}}
\thanks{The work of O. Simeone was supported by the European Research Council (ERC) under the European Union's Horizon Europe Programme (grant agreement No. 101198347), by an Open Fellowship of the EPSRC (EP/W024101/1), and by the EPSRC project (EP/X011852/1).}
\thanks{J.~Luo and N.~Pappas are with the Department of Computer and Information Science, Link\"oping University, Link\"oping 58183, Sweden. (e-mail: jiping.luo@liu.se; nikolaos.pappas@liu.se).}
\thanks{G.~Stamatakis is with the Institute of Computer Science, 
Foundation for Research and Technology -- Hellas (ICS-FORTH), Crete 70013, Greece (e-mail: gstam@ics.forth.gr).}
\thanks{O. Simeone is with the Institute for Intelligent Networked Systems, Northeastern University London, One Portsoken Street, London, E1 8PH, UK (e-mail: o.simeone@northeastern.edu).}
}

\maketitle
\begin{abstract}
We study the remote estimation of a linear Gaussian system over a channel that wears out over time and with every use. The sensor can either transmit a fresh measurement in the current time slot, restore the channel quality at the cost of downtime, or remain silent. Frequent transmissions yield accurate estimates but incur significant wear on the channel. Renewing the channel too often improves channel conditions but results in poor estimation quality. What is the optimal timing to transmit measurements and restore the channel? This problem is formulated as a semi-Markov decision process (SMDP). We establish monotonicity properties of the optimal policy and propose structure-aware solution methods.
\end{abstract}
\begin{IEEEkeywords}
Age of information, channel aging, remote estimation, stability, semi-Markov decision process.
\end{IEEEkeywords}

\section{Introduction}\label{sec:introcution}
\subsection{Motivation and Contributions}
Channel unreliability is a salient feature and a bottleneck in networked control systems (NCSs)~\cite{HenrikCommSurvey2018}. Wireless channels are inherently less reliable than their wired counterparts~\cite{LucaProcIEEE2007}. Data loss, delay, and nonstationarity can significantly degrade system performance. A fundamental yet challenging problem is the remote state estimation of dynamic processes that is robust to the realities of wireless networks.

Remote estimation and control over lossy channels have been extensively studied in the literature. Data loss is modeled stochastically by assigning failure probabilities to different channel states. A common model is the independent and identically distributed (\emph{i.i.d.}) Bernoulli process (see, e.g.,~\cite{Luca2008TAC, shi2011Auto, Leong2017TAC, Shi2020TCNS}), where each packet is dropped independently with a fixed probability. The Markovian channel, modeled by a time-homogeneous finite-state Markov process, captures temporal correlations in packet losses~\cite{Dey2007Auto, Wanchun2022TAC}. 

In contrast, this paper studies remote estimation over \emph{wearing channels} whose quality deteriorates due to natural aging and usage. Such degradation phenomena are often observed in wireless communication systems with low-power devices or those exposed to harsh environments, where frequent communication or operational stress depletes energy reserves, accelerates hardware aging, and further impairs link reliability~\cite{wearingChannel}. Similar effects occur in neural connections~\cite{abraham1996metaplasticity} and quantum channels~\cite{schlosshauer2019quantum}. Neural efficiency declines with age, weakening synaptic strength and neural connections. Repeated use or neurological conditions can further impair synaptic structure and function. In quantum channels, the quality of entanglement and the coherence of qubit pairs degrade over time due to environmental noise and interactions. Restoring channel quality requires external intervention, such as charging or replacing devices, or generating fresh entangled qubit pairs.

Our main contributions are as follows. We formulate an optimization problem to minimize the average estimation error over a wearing channel. The channel reliability is modeled as a decreasing function of age, which grows with time and usage. It is possible to renew the channel at the cost of downtime. This problem is a semi-Markov decision process (SMDP) with a countably infinite state space that features two dependent age processes: the \emph{Age of Information} (AoI) and the \emph{Age of Channel} (AoC). We show the existence of an AoC-monotone optimal policy; that is, the optimal policy is weakly increasing (from idle to transmit to renewal) in AoC for any fixed AoI.  Building on these findings, we propose a structure-aware algorithm to compute the optimal solution with reduced computation overhead.

\subsection{Related Work}
Information \emph{freshness} is closely related to \emph{accuracy} in the remote estimation of linear Gaussian processes. In systems where the sensor runs a Kalman filter to pre-estimate the source states and transmits these estimates, rather than raw measurements, to a remote receiver, the error covariance at the receiver is a monotonically nondecreasing function of the time elapsed since the last received packet was generated~\cite{Shi2020TCNS, Wanchun2022TAC}. This draws a connection between distortion and AoI, which measures how fresh the receiver's knowledge is about an information source~\cite{yatesInfocom}. Prior studies have shown that, to achieve an optimal tradeoff between estimation performance and transmission cost, the optimal transmission policy has a threshold structure: the sensor transmits once the error covariance exceeds a threshold that depends on channel reliability~\cite{Luca2008TAC, shi2011Auto, Leong2017TAC, Shi2020TCNS, Dey2007Auto, Wanchun2022TAC}. Since the estimation error can grow unbounded due to packet losses, sufficient and necessary conditions for mean-square stability have been established for \emph{i.i.d.}~\cite{Luca2008TAC} and Markovian~\cite{Dey2007Auto, Wanchun2022TAC} channel models.

This work can be considered as a contribution to the rich literature on AoI. AoI minimization has been extensively studied in prior work, including~\cite{pappas2023age, YinSun2021TIT, Yates2021JSAC, Modiano2024TON}. Beyond information freshness, recent studies have introduced semantics-aware metrics to capture the contextual relevance and goal-oriented usefulness of information in the remote estimation of Markov processes~\cite{luo2024semantic, luo2024exploiting, luo2024cost}. The closest study to this work is~\cite{George2023Asilomar}, which investigates AoI minimization over wearing channels. However, remote estimation and control over wearing channels remain largely unexplored.

\begin{figure}[t!]
    \centering
    \includegraphics[width=\linewidth]{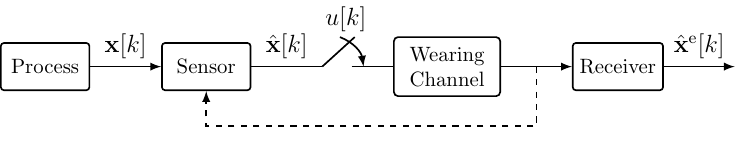}
    \caption{The remote estimation system with a wearing channel.}
    \label{fig:system_model}
\end{figure}

\section{System Model and Problem Formulation}\label{sec:system_model}
\subsection{System Setup}\label{sec:system_setup}
We consider the remote estimation system depicted in Fig.~\ref{fig:system_model}.
\subsubsection{Process}
The source is modeled as a linear Gaussian process
\begin{equation}
    \mathbf{x}[k+1] = \mathbf{A} \mathbf{x}[k] + \mathbf{w}[k], \quad k = 0, 1, 2, \ldots \label{eq:process}
\end{equation}
where $\mathbf{x}[k] \in \mathbb{R}^{l}$ is the process state at time slot $k$, $\mathbf{A} \in\mathbb{R}^{l \times l}$ is the state transition matrix, and $\mathbf{w}[k] \in \mathbb{R}^{l}$ is the process noise. The sequence $\{\mathbf{w}[k]\}$ is assumed to be \emph{i.i.d.} Gaussian with zero mean and covariance matrix $\mathbf{Q}\succeq 0$.

\subsubsection{Sensor}
The process in~\eqref{eq:process} is measured by a sensor as
\begin{equation}
    \mathbf{y}[k] = \mathbf{C}\mathbf{x}[k] + \mathbf{v}[k], 
\end{equation}
where $\mathbf{y}[k] \in\mathbb{R}^{m}$ is the noisy measurement at time slot $k$, $\mathbf{C} \in \mathbb{R}^{m \times l}$ is the measurement matrix, and $\mathbf{v}[k] \in\mathbb{R}^{m}$ is the measurement noise. The sequence $\{\mathbf{v}[k]\}$ is assumed to be \emph{i.i.d.} Gaussian with zero mean and covariance matrix $\mathbf{R} \succ 0$. Moreover, the noise processes $\{\mathbf{w}[k]\}$ and $\{\mathbf{v}[k]\}$ are assumed to be mutually independent. 

After taking a measurement $\mathbf{y}[k]$, the sensor runs a Kalman filter to compute the minimum mean-square error (MSE) estimate of $\mathbf{x}[k]$ using measurements up to and including time slot $k$. The output of the Kalman filter is the state estimate $\hat{\mathbf{x}}[k]$ with the estimation error covariance $\mathbf{P}[k]$, where\cite{shi2011Auto}
\begin{align}
    \hat{\mathbf{x}}[k]
    &=\hat{\mathbf{x}}^{-}[k] + \mathbf{G}[k] (\mathbf{y}[k] -\mathbf{C} \hat{\mathbf{x}}^{-}[k]),\\
    \mathbf{P}[k]
    &=(\mathbf{I} - \mathbf{G}[k] \mathbf{C}) \mathbf{P}^{-}[k].
\end{align}
Here, $\hat{\mathbf{x}}^{-}[k]$, $\mathbf{P}^{-}[k]$, and $\mathbf{G}[k]$ are the \emph{a priori} state estimate, the \emph{a priori} estimation error covariance, and the optimal filter gain at time slot $k$, respectively, given by
\begin{subequations}
\begin{align}
    \hat{\mathbf{x}}^{-}[k]
    &=\mathbf{A} \hat{\mathbf{x}}[k-1],\\
    \mathbf{P}^{-}[k]
    &=\mathbf{A} \mathbf{P}[k-1] \mathbf{A}^{\top} + \mathbf{Q},\\
    \mathbf{G}[k] 
    &=\mathbf{P}^{-}[k] \mathbf{C}^{\top}(\mathbf{C} \mathbf{P}^{-}[k] \mathbf{C}^{\top}
    + \mathbf{R})^{-1}.
\end{align}
\end{subequations}

Under the standard assumptions that the pair $(\mathbf{A}, \mathbf{C})$ is observable and $(\mathbf{A},\sqrt{\mathbf{Q}})$ is controllable, the error covariance $\mathbf{P}[k]$ converges exponentially fast to the steady-state value $\bar{\mathbf{P}}$ as $k \to \infty$~\cite[Ch.~4.4]{kalman_filter}. In the rest of the paper, we assume that the local Kalman filter operates in steady state, i.e., $\mathbf{P}[k]=\bar{\mathbf{P}}$ for all $k$.

\subsubsection{Wearing channel} 
The sensor decides whether to transmit its estimate $\hat{\mathbf{x}}[k]$ to the receiver through a channel that \emph{wears out over time and with every use}. Let $u[k] \in \mathcal{U}$ denote the sensor’s decision at time slot $k$, where $\mathcal{U} = \{0, 1, 2\}$, and
\begin{itemize}
    \item $u[k] = 0$ denotes the \emph{idle} action. The channel is not used, and its reliability degrades due to natural aging.
    \item $u[k] = 1$ denotes the \emph{transmit} action. Each transmission incurs a certain amount of wear on the channel, leading to an additional reduction in its reliability.
    \item $u[k] = 2$ denotes the \emph{renewal} action. Maintenance or replacement is performed to restore the channel quality; however, this action takes time to complete.
\end{itemize}

We assume that data transmission takes one slot to complete, whereas channel renewal occupies $\delta_{\textrm{R}}>1$ consecutive slots. During renewal, the sensor must remain idle; that is, a new decision is made only after the current action is completed. For ease of analysis, we introduce the decision epoch $t=0,1,\ldots$. Throughout this paper, square brackets, e.g., $u[k]$, index quantities evolving at every time slot $k=0,1,\ldots$, whereas subscripts, e.g., $u_{t}$, index quantities evaluated at decision epochs $t=0,1,\ldots$. 

The sojourn time between the $t$th and $(t+1)$th decision epochs is
\begin{equation}
    \ell(u_{t}) = \begin{cases}
        1, &u_{t} \in\{0, 1\},\\
        \delta_{\textrm{R}}, &u_{t} = 2.
    \end{cases}
\end{equation}
Then the AoC is recursively defined at decision epochs as
\begin{equation}
    \tau_{t+1} := \begin{cases}
        \tau_t +1, &u_t = 0,\\
        \tau_t +\tau_\text{D}, &u_t = 1,\\
        1, &u_t = 2,\\
    \end{cases}
\end{equation}
where $\tau_\text{D}>1$ represents the amount of wear incurred by each transmission. The AoC summarizes the accumulated channel degradation caused by natural aging and channel usage. Fig.~\ref{fig:AoC} shows the evolution of the AoC over different timelines.

\begin{figure*}[t!]
\centering
\subfloat[The evolution of AoC at the transmitter.]{
\includegraphics[width=0.45\linewidth]{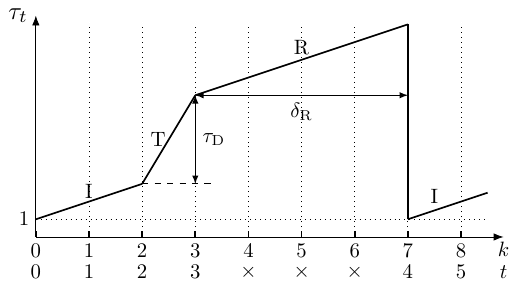}
\label{fig:AoC}
}
\hspace{2em}
\subfloat[The evolution of AoI at the receiver.]{
\includegraphics[width=0.45\linewidth]{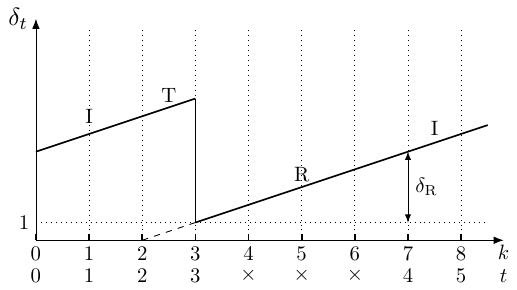}
\label{fig:AoI}
}
\caption{An illustration of the timelines and the evolution of the age processes, where `I', `T', and `R' denote idle, transmit, and renewal actions, respectively. The sensor transmits in the $2$nd slot and renews the channel at the $3$rd slot. Since data transmission takes one slot, the AoI resets to one at the $3$rd decision epoch. During renewal, the sensor remains silent, and the AoI continues to increase.}
\label{fig:age-processes}
\end{figure*}

Let $h_{t}=1$ denote a successful transmission at decision epoch $t$, and $h_{t}=0$ denote an unsuccessful transmission. The channel reliability is time-varying and deteriorates with the AoC. In particular, the probability of a successful transmission at the $t$th epoch is
\begin{equation}
    \Pr[h_{t} = 1| u_{t} = 1] = \theta(\tau_{t}),
\end{equation}
where $\theta: \mathbb{N}^{+} \to [\theta_{\min}, \theta_{\max}]$ is a monotonic nonincreasing and bounded function of the AoC, and $\theta_{\min}$ and $\theta_{\max}\in[0,1]$ represent the worst and best channel conditions, respectively. The probability of an unsuccessful transmission at the $t$th epoch is 
\begin{equation}
    \Pr[h_{t} = 0| u_{t} = 1] = 1 - \theta(\tau_{t}) = \bar{\theta}(\tau_{t}).
\end{equation}

\subsubsection{Receiver}
At each time $k$, the receiver either reconstructs the source state using the newly received packet $\hat{\mathbf{x}}[k]$ or predicts the state based on outdated information. Let $\eta[k] \in \{0,1\}$ indicate whether a transmission is successful at time $k$, where $\eta[k]=1$ if $u[k]=1$ and $h[k]=1$, and $\eta[k]=0$ otherwise. The receiver's estimate evolves as
\begin{equation}
    \hat{\mathbf{x}}^{\textrm{e}}[k+1] = \eta[k]
    \mathbf{A}\hat{\mathbf{x}}[k] + 
	(1 -\eta[k])\mathbf{A} \hat{\mathbf{x}}^{\textrm{e}}[k].
\end{equation}
Provided that the Kalman filter is in steady state, the error covariance at the receiver is given by
\begin{equation}
    \mathbf{P}^{\textrm{e}}[k+1] = \begin{cases}
        \mathbf{A}\bar{\mathbf{P}} \mathbf{A}^{\top} + \mathbf{Q}, &\eta[k] = 1,\\
        \mathbf{A}\mathbf{P}^{\textrm{e}}[k]\mathbf{A}^{\top} + \mathbf{Q}, &\eta[k] = 0.
    \end{cases}\label{eq:error-covariance}
\end{equation}

Define the AoI at the receiver as the time elapsed since the last reception of a packet, i.e.,
\begin{equation}
    \delta[k] := k - \max\{\tau \leq k: \eta[\tau] = 1\}.
\end{equation}
As depicted in Fig.~\ref{fig:AoI}, the AoI increases linearly with time and resets to one only when a new packet is received. The AoI at each decision epoch $t$ evolves as
\begin{equation}
    \delta_{t+1} = \begin{cases}
        1, &\eta_{t} = 1,\\
        \delta_{t} + \delta_{\textrm{R}}, &u_{t} = 2,\\
        \delta_{t} + 1, &\textrm{otherwise}.
    \end{cases}
\end{equation}
The receiver-side error covariance at each epoch $t$ is obtained by applying the recursion in~\eqref{eq:error-covariance} for $\delta_{t}$ consecutive slots starting from the steady-state local covariance $\bar{\mathbf{P}}$. By unrolling this recursion, the error covariance can be written in terms of the AoI as
\begin{equation}
    \mathbf{P}^{\textrm{e}}_{t} =
	\mathbf{A}^{\delta_{t}} \bar{\mathbf{P}} (\mathbf{A}^{\top})^{\delta_{t}} + \sum_{r = 0}^{\delta_{t}-1}\mathbf{A}^{r} \mathbf{Q} (\mathbf{A}^{\top})^{r}.
\end{equation}
\begin{lemma}[\hspace{-0.03em}\cite{Leong2017TAC}]\label{lemma:monotonicity-of-AoI}
The estimation MSE at the receiver, i.e.,
\begin{equation}
    \operatorname{Tr}(\mathbf{P}^{\textrm{e}}_{t}) =: f(\delta_{t}),
\end{equation}
is monotonically nondecreasing in the AoI.
\end{lemma}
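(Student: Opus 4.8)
The plan is to work directly with the closed-form expression for $\mathbf{P}_t$ in terms of the AoI and to exploit the positive semidefinite (PSD) ordering on symmetric matrices, under which the trace is monotone. Write $\mathbf{P}^{(\delta)} \coloneqq \mathbf{A}^{\delta}\bar{\mathbf{P}}(\mathbf{A}^\textrm{T})^{\delta} + \sum_{r=0}^{\delta-1}\mathbf{A}^{r}\mathbf{Q}(\mathbf{A}^\textrm{T})^{r}$, so that $f(\delta) = \tr(\mathbf{P}^{(\delta)})$. Since $\mathbf{X} \succeq \mathbf{Y}$ implies $\tr(\mathbf{X}) \geq \tr(\mathbf{Y})$, it suffices to show $\mathbf{P}^{(\delta+1)} \succeq \mathbf{P}^{(\delta)}$ for every $\delta \geq 1$.

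First I would compute the difference telescopically. Matching the two sums and pulling out the common powers of $\mathbf{A}$ gives
\begin{align}
\mathbf{P}^{(\delta+1)} - \mathbf{P}^{(\delta)} = \mathbf{A}^{\delta}\big(\mathbf{A}\bar{\mathbf{P}}\mathbf{A}^\textrm{T} + \mathbf{Q} - \bar{\mathbf{P}}\big)(\mathbf{A}^\textrm{T})^{\delta}.
\end{align}
Because the congruence transformation $\mathbf{M} \mapsto \mathbf{A}^{\delta}\mathbf{M}(\mathbf{A}^\textrm{T})^{\delta}$ maps the PSD cone into itself, the entire claim reduces to the single inequality $\mathbf{A}\bar{\mathbf{P}}\mathbf{A}^\textrm{T} + \mathbf{Q} \succeq \bar{\mathbf{P}}$; taking traces in the display then yields $f(\delta+1) - f(\delta) = \tr\!\big(\mathbf{A}^{\delta}(\mathbf{A}\bar{\mathbf{P}}\mathbf{A}^\textrm{T}+\mathbf{Q}-\bar{\mathbf{P}})(\mathbf{A}^\textrm{T})^{\delta}\big) \geq 0$.

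The main step, therefore, is to establish $\mathbf{A}\bar{\mathbf{P}}\mathbf{A}^\textrm{T} + \mathbf{Q} \succeq \bar{\mathbf{P}}$, and I would argue it from the steady-state Kalman recursion recalled in Section~\ref{sec:system_setup}. In steady state the a priori covariance is $\bar{\mathbf{P}}^{-} = \mathbf{A}\bar{\mathbf{P}}\mathbf{A}^\textrm{T} + \mathbf{Q}$, and the measurement update subtracts a PSD correction, $\bar{\mathbf{P}} = \bar{\mathbf{P}}^{-} - \bar{\mathbf{P}}^{-}\mathbf{C}^\textrm{T}(\mathbf{C}\bar{\mathbf{P}}^{-}\mathbf{C}^\textrm{T}+\mathbf{R})^{-1}\mathbf{C}\bar{\mathbf{P}}^{-}$, where the subtracted term has the form $\mathbf{N}^\textrm{T}\mathbf{S}^{-1}\mathbf{N}$ with $\mathbf{N} = \mathbf{C}\bar{\mathbf{P}}^{-}$ and $\mathbf{S} = \mathbf{C}\bar{\mathbf{P}}^{-}\mathbf{C}^\textrm{T}+\mathbf{R} \succ 0$, hence is PSD. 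Thus $\bar{\mathbf{P}} \preceq \bar{\mathbf{P}}^{-} = \mathbf{A}\bar{\mathbf{P}}\mathbf{A}^\textrm{T}+\mathbf{Q}$ (intuitively, propagating one extra step without the intervening measurement cannot reduce the MSE). I expect this PSD base inequality — the only place the Riccati fixed-point structure enters — to be the substantive point; everything else is bookkeeping. An equivalent route, if preferred, is to observe that $\mathbf{P}^{(\delta+1)} = \mathbf{A}\mathbf{P}^{(\delta)}\mathbf{A}^\textrm{T} + \mathbf{Q}$ and to induct on $\delta$ using monotonicity of the Lyapunov map $\mathbf{X}\mapsto \mathbf{A}\mathbf{X}\mathbf{A}^\textrm{T}+\mathbf{Q}$ on the PSD cone, with the same base case $\mathbf{P}^{(1)} \succeq \mathbf{P}^{(0)} = \bar{\mathbf{P}}$.
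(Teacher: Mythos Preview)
Your argument is correct. The paper does not supply its own proof of this lemma; it simply cites \cite{Leong2017TAC} for the result. Your approach --- computing $\mathbf{P}^{(\delta+1)}-\mathbf{P}^{(\delta)}=\mathbf{A}^{\delta}(\mathbf{A}\bar{\mathbf{P}}\mathbf{A}^\textrm{T}+\mathbf{Q}-\bar{\mathbf{P}})(\mathbf{A}^\textrm{T})^{\delta}$ and reducing to the base inequality $\bar{\mathbf{P}}\preceq \mathbf{A}\bar{\mathbf{P}}\mathbf{A}^\textrm{T}+\mathbf{Q}$ via the PSD correction in the steady-state Riccati update --- is the standard way this monotonicity is established and matches the argument in the cited reference.
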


\textit{Remark:} Lemma~\ref{lemma:monotonicity-of-AoI} implies that sensor measurements are more valuable when they are fresh. This draws a connection between distortion and information age~\cite{luo2024exploiting, luo2024cost}. Although this paper focuses primarily on remote estimation, the proposed framework and results naturally extend to the broader AoI literature.

The receiver sends an acknowledgment (ACK) to the sensor upon the successful reception of an update packet. ACK packets are assumed to be instantaneous and error-free. Hence, the sensor knows the remote estimate at the receiver. Let $s_{t} = (\tau_{t}, \delta_{t})$ denote the system state at the $t$th epoch, where $s_{t}\in\mathcal{S}$ and the state space is $\mathcal{S} = \mathbb{N}^{+}\times \mathbb{N}^{+}$. Since actions may span different numbers of slots, we define the one-stage cost of taking action $u_{t}$ in state $s_{t}$ as the cumulative cost incurred before decision epoch $t+1$, i.e.,
\begin{equation}
\tilde{c}(s_{t}, u_{t}) = \begin{cases}
    f(\delta_{t}), & u_{t} = 0,\\
    f(\delta_{t}) + E_{\textrm{T}}, &u_{t} = 1,\\
    \sum_{r=0}^{\delta_{\textrm{R}}-1} f(\delta_{t} + r) + E_{\textrm{R}}, &u_{t} = 2,
\end{cases}\label{eq:cost-function}
\end{equation}
where $E_{\textrm{T}}$ and $E_{\textrm{R}}$ are the resource utilization costs associated with the transmit and renewal actions, respectively.

\subsection{Problem Formulation}
We aim to optimize the estimation quality while accounting for channel aging. The sensor's memory at decision epoch $t$ is
\begin{equation*}
    i_{t} = (\tau_{0:t}, \delta_{0:t}, u_{0:t-1}).
\end{equation*}
Using this information, the sensor selects an action $u_{t}$ according to a \emph{decision rule} $\pi_{t}$, i.e.,
\begin{equation}
    u_{t} = \pi_{t}(i_{t}) = \pi_{t}(\tau_{0:t}, \delta_{0:t}, u_{0:t-1}).
\end{equation}
A \emph{policy} $\pi = (\pi_{0}, \pi_{1}, \ldots)$ is a sequence of decision rules. We call a policy \emph{Markovian} if, for every $t\geq 0$, it selects an action based only on the current system state, that is, $u_{t} = \pi_{t}(s_{t})$. A Markovian policy is \emph{deterministic} if, for every $s_t\in\mathcal{S}$, it selects an action with probability one; otherwise, it is \emph{randomized}. Let $\Pi$ and $\Pi^{\textrm{MD}}$ denote the set of all admissible and Markovian deterministic policies, respectively.

The average estimation MSE of a policy $\pi\in\Pi$ over an infinite horizon is defined as
\begin{equation}
    J^{\pi}(s_{0}) := \limsup_{T\to\infty}\frac{\mathbb{E}^{\pi} \big[\sum_{t=0}^{T-1}\tilde{c}(s_{t}, u_{t}) {\,}| {\,} s_{0}\big]}{\mathbb{E}^\pi \big[\sum_{t=0}^{T-1}\ell( u_{t}){\,}| {\,} s_{0}\big]}. \label{eq:total-cost}
\end{equation}
The goal is to find a policy $\pi^{*}$ that minimizes~\eqref{eq:total-cost}, i.e.,
\begin{equation}
    J^{*}(s_{0}) = \inf_{\pi\in\Pi} J^{\pi}(s_{0}).\label{problem:MDP}
\end{equation}

Problem~\eqref{problem:MDP} is a semi-Markov decision process (SMDP) characterized by $(\mathcal{S}, \mathcal{U}, \tilde{P}, \tilde{c}, \ell)$, where $\mathcal{S}$ and $\mathcal{U}$ are the state space and action space, $\tilde{c}(s, u)$ is the cost function given by~\eqref{eq:cost-function}, $\ell(u)$ is the sojourn time, and $\tilde{P}:\mathcal{S} \times \mathcal{U}\times\mathcal{S} \to [0,1]$ is the state transition probability function. Based on the model described in Sec.~\ref{sec:system_setup}, $\tilde{P}$ is obtained as
\begin{align}
    \tilde{P}_{s,s^{\prime}}(u) &:= \Pr[s_{t+1}=s^{\prime}|s_{t} = (\tau, \delta), u_{t} = u] \notag\\
    &=\begin{cases}
        1, &u=0, s^{\prime}=(\tau+1, \delta+1),\\
        \theta(\tau), &u=1, s^{\prime}=(\tau+\tau_{\textrm{D}}, 1),\\
        \bar{\theta}(\tau), &u=1, s^{\prime} =(\tau + \tau_{\textrm{D}}, \delta + 1),\\
        1, &u=2, s^{\prime} = (1, \delta + \delta_{\textrm{R}}),\\
        0, &\textrm{otherwise}.
    \end{cases} \label{eq:dynamics}
\end{align}

\textit{Discussions on the SMDP:}

(1) There are two types of costs associated with data transmission: energy consumption and channel wear. The channel quality, and hence the state transition probabilities in~\eqref{eq:dynamics}, depend on the history of past actions since the most recent channel renewal. This dependence makes our problem significantly more challenging than existing studies that assume \emph{i.i.d.} or Markovian channel models.

(2) When the spectral radius of the state transition matrix $\mathbf{A}$ satisfies $\rho(\mathbf{A}) \geq 1$, the cost function $\tilde{c}(s,u)$ may become unbounded and grow exponentially fast with the AoI~\cite{Luca2008TAC}. Consequently, the system may become unstable (see Definition~\ref{def:stability}). Moreover, Problem~\eqref{problem:MDP} faces both computational and memory challenges because the state space is countably infinite. Therefore, a theoretical analysis of the existence and structure of the optimal policy is essential.

\begin{definition}[Mean-square stability]\label{def:stability}
The remote estimation system described in Sec.~\ref{sec:system_setup} is mean-square stable under a policy $\pi$ if the MSE is finite, i.e., $J^{\pi}(s_{0}) < \infty$ for all initial state $s_{0} \in \mathcal{S}$.
\end{definition}

\section{Main Results}\label{sec:main results}
This section presents structural results and develops structure-aware algorithms for the SMDP~\eqref{problem:MDP}.

\subsection{Existence of an Optimal Policy}
When $\rho(\mathbf{A})<1$, the cost function $\tilde{c}(s,u)$ is bounded, and hence the system is always stable. Therefore, in what follows, we focus on the more challenging case where $\rho(\mathbf{A}) \geq 1$. 

The following lemma states that the system can be stabilized even without using the renewal action if it is stabilizable under the worst channel conditions. This result stems from prior findings~\cite{Luca2008TAC} that a sufficient condition for mean-square stability over an \emph{i.i.d.} channel with constant reliability $\Theta$ is $\rho^{2}(\mathbf{A})(1 - \Theta) < 1$.

\begin{lemma}\label{lemma:without-renewal}
If $\rho^{2}(\mathbf{A})(1 - \theta_{\min}) < 1$, the system can be stabilized without using the renewal action. 
\end{lemma}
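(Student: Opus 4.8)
\emph{Proof proposal.} The plan is to exhibit one stationary policy that never uses the renewal action and is already mean-square stable; the natural candidate is the transmit-always policy $\tilde\pi$ depicted in Fig.~\ref{fig:transmit-always-policy}. Since $\tilde\pi$ selects $u_t=1$ at every epoch, its per-epoch cost is just $c(s_t,1)=f(\delta_t)$, so it suffices to bound $\mathbb{E}^{\tilde\pi}[f(\delta_t)\mid s_0]$ uniformly in $t$: this gives $J^{\tilde\pi}(s_0)<\infty$ and hence $J^*(s_0)\le J^{\tilde\pi}(s_0)<\infty$, which is exactly stabilizability without renewal. The point of the argument is that, although the wearing channel is nonstationary, under $\tilde\pi$ its success probability $\theta(\tau_t)$ never drops below $\theta_\textrm{min}$, so the AoI process is ``no worse'' than that of an \emph{i.i.d.} channel with constant reliability $\theta_\textrm{min}$, for which \cite{Luca2008TAC} already guarantees stability under $\rho^2(\mathbf{A})(1-\theta_\textrm{min})<1$.

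The cleanest way to make this precise is a pathwise coupling. Let $\hat\delta_t$ be the AoI generated by the transmit-always policy over an \emph{i.i.d.} channel of constant reliability $\theta_\textrm{min}$, started from $\hat\delta_0=\delta_0$, and drive both channels by a common sequence of uniform random variables $U_t$, declaring a success on the \emph{i.i.d.} channel iff $U_t\le\theta_\textrm{min}$ and on the wearing channel iff $U_t\le\theta(\tau_t)$. Because $\theta(\tau_t)\ge\theta_\textrm{min}$, every success of the \emph{i.i.d.} channel is also a success of the wearing channel, so a simple induction on $t$ yields $\delta_t\le\hat\delta_t$ almost surely. By Lemma~\ref{lemma:monotonicity-of-AoI}, $f$ is increasing, hence $f(\delta_t)\le f(\hat\delta_t)$ for all $t$, and therefore $J^{\tilde\pi}(s_0)\le \limsup_{T\to\infty}\frac1T\sum_{t=0}^{T-1}\mathbb{E}[f(\hat\delta_t)]$, which is finite precisely because $\rho^2(\mathbf{A})(1-\theta_\textrm{min})<1$ \cite{Luca2008TAC}.

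If one prefers a self-contained estimate, the same conclusion follows directly. From the Kalman recursion, $f(\delta)=\tr\!\big(\mathbf{A}^{\delta}\bar{\mathbf{P}}(\mathbf{A}^{\textrm T})^{\delta}\big)+\sum_{r=0}^{\delta-1}\tr\!\big(\mathbf{A}^{r}\mathbf{Q}(\mathbf{A}^{\textrm T})^{r}\big)$, and since $\|\mathbf{A}^{n}\|\le C_\epsilon(\rho(\mathbf{A})+\epsilon)^{n}$ for every $\epsilon>0$, for each $\beta>\rho^2(\mathbf{A})$ there is a constant $C_\beta$ with $f(\delta)\le C_\beta\beta^{\delta}$; the hypothesis lets me pick such a $\beta$ with also $\beta(1-\theta_\textrm{min})<1$. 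Under $\tilde\pi$, the event $\{\delta_t=k\}$ with $1\le k\le t$ forces $k-1$ consecutive transmission failures, each of conditional probability $\bar\theta(\tau)\le 1-\theta_\textrm{min}$, so $\mathrm{Pr}^{\tilde\pi}[\delta_t=k\mid s_0]\le(1-\theta_\textrm{min})^{k-1}$, while the only value $k>t$ reachable is $k=\delta_0+t$, with probability at most $(1-\theta_\textrm{min})^{t}$. Summing $f(k)\,\mathrm{Pr}^{\tilde\pi}[\delta_t=k\mid s_0]$ then gives a convergent geometric series plus a term $\le C_\beta\beta^{\delta_0}(\beta(1-\theta_\textrm{min}))^{t}$, so $\mathbb{E}^{\tilde\pi}[f(\delta_t)\mid s_0]\le M(s_0)<\infty$ uniformly in $t$, as desired.

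I expect the only delicate point to be the step from the spectral radius $\rho(\mathbf{A})$ to a clean exponential bound on $f$ — the $\epsilon$-slack in $\|\mathbf{A}^{n}\|$ must be chosen small enough to keep $\beta(1-\theta_\textrm{min})<1$, and the initial-transient term $k=\delta_0+t$ must be checked to stay bounded. In the coupling-based route this subtlety is replaced by the (easy) observation that the wearing channel under $\tilde\pi$ is, epoch by epoch, stochastically dominated by the \emph{i.i.d.} channel of reliability $\theta_\textrm{min}$, after which the cited result of \cite{Luca2008TAC} and the monotonicity of $f$ in Lemma~\ref{lemma:monotonicity-of-AoI} close the argument.
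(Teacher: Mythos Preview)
Your proposal is correct and follows precisely the route the paper indicates: use the transmit-always policy $\tilde\pi$, observe that the wearing channel under $\tilde\pi$ is pathwise no worse than an \emph{i.i.d.} channel with constant reliability $\theta_\textrm{min}$, and then invoke the stability criterion of~\cite{Luca2008TAC}. The paper does not spell out a proof beyond this one-line reduction, so your coupling argument and self-contained geometric bound simply make rigorous what the paper leaves implicit.
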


Lemma~\ref{lemma:without-renewal} implies that mean-square stability is achieved if the system state is confined within the region
\begin{equation*}
    \underline{\mathcal{S}} = 
    \{s \in \mathcal{S}: \rho^{2}(\mathbf{A})(1 - \theta(\tau))< 1 \} \neq \emptyset.
\end{equation*}
Let $\overline{\mathcal{S}} = \mathcal{S} \setminus \underline{\mathcal{S}}$ denote the unstable region. Consider the policy
\begin{equation*}
    \pi^{\prime}(s) = \begin{cases}
        1~(\text{transmit}), &s \in \underline{\mathcal{S}},\\
        2~(\text{renew}), &s \in \overline{\mathcal{S}},
    \end{cases}
\end{equation*}
which transmits when the state is in the stabilizable region $\underline{\mathcal{S}}$ and renews the channel whenever the state enters the unstable region $\overline{\mathcal{S}}$.

Each time the system reaches $\overline{\mathcal{S}}$, the renewal action resets the AoC to one, thereby restoring the channel reliability to $\theta_{\max}$. If $\rho^{2}(\mathbf{A})(1-\theta_{\max})<1$, then the post-renewal state belongs to the stabilizable region. Since renewal is completed in a finite number of slots, the system repeatedly returns to a stabilizable region in finite time. Consequently, the expected error covariance remains bounded under policy $\pi^{\prime}$. This implies the following result.

\begin{lemma}\label{lemma:with-renewal}
If $\rho^{2}(\mathbf{A})(1 - \theta_{\max}) < 1$, the system can be stabilized by leveraging renewal actions.
\end{lemma}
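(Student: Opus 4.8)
The plan is to establish the lemma constructively, by showing that the particular policy $\pi'$ introduced above — transmit on $\ubar{\mathcal{S}}$, renew on $\bar{\mathcal{S}}$ — induces a finite long-run average MSE. The first step is to check that $\pi'$ is well defined and ``regenerative'' under the hypothesis. Since $\theta(\cdot)$ is decreasing, $\theta(1)=\theta_\textrm{max}$, so $\rho^2(\mathbf{A})(1-\theta_\textrm{max})<1$ gives $\{(1,\delta):\delta\ge1\}\subseteq\ubar{\mathcal{S}}$; moreover $\ubar{\mathcal{S}}$ is an AoC down-set, $\ubar{\mathcal{S}}=\{(\tau,\delta):\tau\le\bar\tau\}$ for some threshold $\bar\tau\ge1$. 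If $\bar\tau=\infty$, then $\rho^2(\mathbf{A})(1-\theta_\textrm{min})<1$ and the claim already follows from Lemma~\ref{lemma:without-renewal}, so assume $\bar\tau<\infty$. Then $\pi'$ renews only from states with $\tau>\bar\tau$ and always returns the AoC to $1$, so $\tau_t$ stays in the finite set $\{1,1+\tau_\textrm{D},1+2\tau_\textrm{D},\dots\}\cap[1,\bar\tau+\tau_\textrm{D}]$, and the trajectory decomposes into structurally identical \emph{blocks} of exactly $n+1$ epochs, where $n:=\lfloor(\bar\tau-1)/\tau_\textrm{D}\rfloor+1$: these are $n$ consecutive transmissions at AoC values $1,1+\tau_\textrm{D},\dots,1+(n-1)\tau_\textrm{D}$, each with success probability at least $\theta(\bar\tau)$, followed by one renewal.

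The second step bounds the accumulated cost. By Lemma~\ref{lemma:monotonicity-of-AoI} and the closed form of $\mathbf{P}_t$, for every $\varepsilon>0$ there are constants $a,b>0$ with $f(\delta)\le a\,(\rho^2(\mathbf{A})+\varepsilon)^{\delta}+b$ (the $\varepsilon$ absorbs any non-diagonalizable part of $\mathbf{A}$). Taking the Lyapunov function $V(\tau,\delta)=(\rho^2(\mathbf{A})+\varepsilon)^{\delta}$, I would establish a geometric-drift inequality over one block: since, if \emph{any} transmission within a block succeeds, the AoI at the start of the next block is at most $n+\delta_\textrm{R}$, one obtains, for $t$ at a block boundary, $\mathbb{E}^{\pi'}\!\big[V(s_{t+n+1})\mid s_t\big]\le\gamma\,V(s_t)+K$, where $K<\infty$ is a constant and $\gamma=\big(\rho^2(\mathbf{A})+\varepsilon\big)^{\,n+\delta_\textrm{R}}\prod_{i=1}^{n}\bigl(1-\theta(1+(i-1)\tau_\textrm{D})\bigr)$ is the probability that all $n$ transmissions fail times the worst-case covariance inflation over the block. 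The per-block cost is likewise dominated by $\tilde a\,V(s_t)+\tilde b$ via Lemma~\ref{lemma:monotonicity-of-AoI} (the renewal epoch's cost $\sum_{r=0}^{\delta_\textrm{R}-1}f(\delta_t+r)$ contributes only a further bounded multiplicative factor). Provided $\gamma<1$, summing the drift inequality over blocks and invoking the standard Foster--Lyapunov comparison bound gives $\sup_t\mathbb{E}^{\pi'}[V(s_t)\mid s_0]<\infty$, hence $\limsup_T\frac1T\sum_{t=0}^{T-1}\mathbb{E}^{\pi'}[c(s_t,u_t)\mid s_0]<\infty$, i.e. $J^{\pi'}(s_0)<\infty$ for all $s_0$.

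The crux — and the step I expect to carry the proof's weight — is verifying $\gamma<1$: one must show that the exponential blow-up of the error covariance during the $\delta_\textrm{R}$-slot renewal downtime is outweighed by the geometric decay of the all-failures probability within a block, i.e. $\big(\rho^2(\mathbf{A})+\varepsilon\big)^{\delta_\textrm{R}}\prod_{i=1}^{n}\big(\rho^2(\mathbf{A})+\varepsilon\big)\bigl(1-\theta(1+(i-1)\tau_\textrm{D})\bigr)<1$ for small enough $\varepsilon$. Each factor $\rho^2(\mathbf{A})\bigl(1-\theta(1+(i-1)\tau_\textrm{D})\bigr)$ is already strictly below $1$ by the definition of $\bar\tau$, so the real work is in dominating the residual factor $(\rho^2(\mathbf{A}))^{\delta_\textrm{R}}$ — which is where the geometry of the stable region $\ubar{\mathcal{S}}$ (its extent relative to $\tau_\textrm{D}$ and $\delta_\textrm{R}$) must enter, rather than the inequality $\rho^2(\mathbf{A})(1-\theta_\textrm{max})<1$ alone. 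If the bounded stable region does not suffice on its own, a natural refinement is to optimize the renewal trigger over a threshold $T\in[1,\bar\tau]$ (renew as soon as the AoC would exceed $T$), pick the $T$ minimizing the resulting block-contraction factor, and rerun the regenerative argument with that $T$ in place of $\bar\tau$; I expect that optimization and the associated bookkeeping, rather than any new conceptual ingredient, to make up the remainder of the argument.
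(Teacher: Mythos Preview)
Your regenerative/Lyapunov framework is exactly the route the paper gestures at --- it introduces $\pi'$ immediately before the lemma and offers no separate proof --- so the overall architecture is sound and matches the paper. The problem is the step you yourself flag as ``the crux'': the contraction inequality $\gamma<1$ is \emph{not} a consequence of the hypothesis $\rho^2(\mathbf{A})(1-\theta_{\textrm{max}})<1$ alone, and your proposed fallback of optimizing the renewal trigger over $T\in[1,\bar\tau]$ does not close the gap.

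To see this concretely, take $\theta(1)=\theta_{\textrm{max}}$ and $\theta(\tau)=0$ for all $\tau\ge 2$. Then $\bar\tau=1$, so $n=1$ and there is no threshold to optimize: the only nondegenerate policy transmits once at $\tau=1$ and then renews. Your block contraction factor becomes $\gamma=(1-\theta_{\textrm{max}})\bigl(\rho^2(\mathbf{A})+\varepsilon\bigr)^{1+\delta_{\textrm R}}$, and the hypothesis controls only the single factor $(1-\theta_{\textrm{max}})\rho^2(\mathbf{A})$, not the additional $\rho^{2\delta_{\textrm R}}(\mathbf{A})$ accrued during downtime. With, say, $\rho(\mathbf{A})=2$, $\theta_{\textrm{max}}=0.99$, $\delta_{\textrm R}=10$ one has $\rho^2(\mathbf{A})(1-\theta_{\textrm{max}})=0.04<1$ yet $(1-\theta_{\textrm{max}})\rho^{2(1+\delta_{\textrm R})}(\mathbf{A})=0.01\cdot 4^{11}\approx 4.2\times 10^4$; a direct renewal--reward calculation then shows $J^\pi(s_0)=\infty$ for \emph{every} admissible policy, since any successful transmission must be preceded by a renewal and hence at least $1+\delta_{\textrm R}$ slots of AoI growth. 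So this is not bookkeeping that a sharper threshold choice will absorb: without an additional quantitative assumption linking $\delta_{\textrm R}$ and the decay profile of $\theta(\cdot)$ to the slack in $\rho^2(\mathbf{A})(1-\theta_{\textrm{max}})<1$, the drift argument you sketch cannot terminate in $\gamma<1$, and indeed the conclusion as stated cannot hold in full generality.
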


As a consequence of Lemma~\ref{lemma:with-renewal}, we establish the existence of a deterministic optimal policy to problem~\eqref{problem:MDP} as follows.

\begin{theorem}\label{theorem:existence}
If $\rho^{2}(\mathbf{A})(1 - \theta_{\max}) < 1$, there exists a constant $\lambda^{*}$ and a value function $V$ that satisfies Bellman's equation
\begin{equation}
    V(s) = \min_{u \in \mathcal{U}} \left[\tilde{c}(s,u) - \lambda^* \ell(u) + \sum\nolimits_{s^{\prime}}
    \tilde{P}_{s,s^{\prime}}(u)V(s^{\prime})\right], \label{eq:bellman-equation-prime}
\end{equation}
where $\lambda^{*}$ is the minimal average cost. Define
\begin{equation}
    c(s, u) = \frac{\tilde{c}(s,u)}{\ell(u)}, \,\,\, P_{s,s^{\prime}}(u) = \begin{cases}
        \frac{\tilde{P}_{s,s^{\prime}}(u)}{\ell(u)}, &s \neq s^{\prime},\\
        1-\frac{1 - \tilde{P}_{s,s}(u)}{\ell(u)}, &s= s^{\prime}.
    \end{cases}
\end{equation}
Then the SMDP $(\mathcal{S}, \mathcal{U}, \tilde{P}, \tilde{c}, \ell)$ is equivalent to the MDP $(\mathcal{S}, \mathcal{U}, P, c)$. The optimality equation~\eqref{eq:bellman-equation-prime} can be simplified to
\begin{equation}
    \lambda^{*} + V(s) = \min_{u\in \mathcal{U}}\left[c(s,u) + \sum\nolimits_{s^{\prime}}
    P_{s,s^{\prime}}(u) V(s^{\prime}) \right].\label{eq:bellman-equation}
\end{equation}
Moreover, any policy $\pi^{*} \in\Pi^{\textrm{MD}}$ that minimizes~\eqref{eq:bellman-equation} is optimal.
\end{theorem}
\begin{proof}
See Appendix~\ref{proof:theorem-existence}.
\end{proof}

Theorem~\ref{theorem:existence} states that there is no loss of optimality in restricting attention to Markovian deterministic policies. Because histories need not be retained and nonrandomized policies suffice, this simplifies computation through reduced storage and fewer arithmetic operations. Importantly, by embedding the action-dependent dwell times into the transition probabilities and costs, the \emph{epoch-by-epoch} SMDP is transformed into an equivalent \emph{slot-by-slot} MDP. Hence, classical dynamic programming methods developed for MDPs, such as relative value iteration (RVI), can be applied to solve~\eqref{eq:bellman-equation}.

The RVI operates as follows. For each iteration $n \geq 1$, it updates the value function using the recursion
\begin{subequations}
\begin{align}
    Q^{n}(s, u) &= c(s, u) + \sum\nolimits_{s^{\prime}}P_{s,s^{\prime}}(u) V^{n-1}(s^{\prime}),\label{eq:rvi-a}\\
    \tilde{V}^{n}(s) &= \min_{u \in \mathcal{U}} \left[Q^{n}(s, u)\right],\label{eq:rvi-b}\\
    V^{n}(s) &= \tilde{V}^{n}(s) - \tilde{V}^{n}(s_{\textrm{ref}}),\label{eq:rvi-c}
\end{align}\label{eq:rvi}
\end{subequations}
where $Q^{n}(s,u)$ is the Q-factor at iteration $n$ and $s_{\textrm{ref}}$ is an arbitrary reference state. The sequences $\{\tilde{V}^{n}(s)\}$ and $\{V^{n}(s)\}$ converge as $n \to \infty$~\cite{puterman1994markov}. Moreover, we have $\lambda^{*} = \tilde{V}(s_{\textrm{ref}})$ and $V(s) = \tilde{V}(s) - \tilde{V}(s_{\textrm{ref}})$ as a solution to~\eqref{eq:bellman-equation}. 

For numerical tractability, RVI is implemented on a finite state space. That is, the age processes are truncated at a sufficiently large bound $N$~\cite{Shi2020TCNS, luo2024cost}, i.e.,
\begin{align*}
    \tilde{\delta}_t = \min\{\delta_{t}, N\}, 
    ~ \tilde{\tau}_{t} = \min\{\tau_{t}, N\}.
\end{align*}
Let $\mathcal{S}_{N} = \{(\tau, \delta): 1 \leq \tau, \delta \leq N\}$ denote the truncated state space. The per-iteration time complexity of RVI is $\mathcal{O}(|\mathcal{S}_{N}|^2| \mathcal{U}|)$, which becomes computationally prohibitive when $N$ is large~\cite{puterman1994markov}. In the sequel, we establish useful monotonicity properties of the optimal policy and propose low-complexity, structure-aware algorithms.

\subsection{Monotonicity Properties}\label{sec:structure}
We now present structural results on the optimal policy. As will become apparent, the introduction of channel renewal makes our problem challenging and disrupts some of the convenient properties that would hold in the absence of renewal.

We first show the monotonicity of the value function $V$. 

\begin{proposition}\label{proposition:monotonicity-of-V}
The value function $V$ given in~\eqref{eq:bellman-equation} satisfies:
\begin{itemize}
    \item[i.] for any $\tau\in \mathbb{N}^{+}$, $V(\tau, \delta)$ is increasing in $\delta$.
    \item[ii.] for any $\delta\in \mathbb{N}^{+}$, $V(\tau, \delta)$ is increasing in $\tau$.
\end{itemize}
\end{proposition}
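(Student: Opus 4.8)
The plan is to prove both properties simultaneously by induction on the relative value iteration (RVI) iterates $\{V^n\}$ of \eqref{eq:rvi-a}--\eqref{eq:rvi-c}, initialized at $V^0\equiv 0$, and then to pass to the limit using $V^n\to V$. The base case is trivial, monotonicity in the state survives pointwise limits, and the constant shift $-\tilde V^n(s_\textrm{ref})$ in \eqref{eq:rvi-c} does not affect it; so it suffices to show that if $V^{n-1}$ is increasing in $\tau$ and in $\delta$ then so is $\tilde V^n$. Since $\tilde V^n(s)=\min_{u\in\mathcal U}Q^n(s,u)$ and $\mathcal U$ is finite, a pointwise inequality $Q^n(s,u)\le Q^n(\hat s,u)$ holding for every $u$ already yields $\tilde V^n(s)\le\tilde V^n(\hat s)$ (evaluate the left minimum at an optimal action for $\hat s$). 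Everything therefore reduces to checking, for each fixed $u\in\{0,1,2\}$, that $s\mapsto Q^n(s,u)$ is increasing in $\delta$ (fixed $\tau$) and in $\tau$ (fixed $\delta$), assuming $V^{n-1}$ has these properties. From \eqref{eq:rvi-a}, \eqref{eq:cost-function} and \eqref{eq:dynamics} the Q-factors are $Q^n((\tau,\delta),0)=f(\delta)+V^{n-1}(\tau+1,\delta+1)$, $Q^n((\tau,\delta),1)=f(\delta)+\theta(\tau)V^{n-1}(\tau+\tau_\textrm{D},1)+\bar\theta(\tau)V^{n-1}(\tau+\tau_\textrm{D},\delta+1)$, and $Q^n((\tau,\delta),2)=\sum_{r=0}^{\delta_\textrm{R}-1}f(\delta+r)+V^{n-1}(1,\delta+\delta_\textrm{R})$.

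Property (i) is direct: on replacing $\delta$ by $\delta+1$, every $f(\cdot)$ term in each Q-factor weakly increases by Lemma~\ref{lemma:monotonicity-of-AoI}, and every continuation term weakly increases by the inductive hypothesis in $\delta$ -- note in particular that the ``success'' successor $(\tau+\tau_\textrm{D},1)$ in $Q^n(\cdot,1)$ does not move. Hence $Q^n((\tau,\delta),u)\le Q^n((\tau,\delta+1),u)$ for all $u$.

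Property (ii) is immediate for $u=0$ (inductive hypothesis in $\tau$ applied to $V^{n-1}(\tau+1,\delta+1)$) and for $u=2$ (since $Q^n(\cdot,2)$ does not depend on $\tau$). The transmit action is the crux: increasing $\tau$ at once lowers the success probability $\theta(\tau)$ and worsens the post-transmission age $\tau+\tau_\textrm{D}$, so no single term can be compared in isolation. Writing $a_1=V^{n-1}(\tau+\tau_\textrm{D},1)$, $a_2=V^{n-1}(\tau+\tau_\textrm{D},\delta+1)$, $b_1=V^{n-1}(\tau+1+\tau_\textrm{D},1)$, $b_2=V^{n-1}(\tau+1+\tau_\textrm{D},\delta+1)$, $p=\theta(\tau)$, $q=\theta(\tau+1)$, the inductive hypothesis gives $a_1\le b_1$, $a_2\le b_2$ (monotonicity in $\tau$), $a_1\le a_2$, $b_1\le b_2$ (monotonicity in $\delta$, as $1\le\delta+1$), and $\theta$ decreasing gives $0\le q\le p\le 1$. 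A one-line rearrangement of $qb_1+(1-q)b_2-pa_1-(1-p)a_2$ then gives
\begin{align}
&Q^n((\tau+1,\delta),1)-Q^n((\tau,\delta),1)\notag\\
&\qquad=q(b_1-a_1)+(1-q)(b_2-a_2)+(p-q)(a_2-a_1)\ \ge\ 0,\notag
\end{align}
each summand being a product of nonnegative factors. I expect this convex-combination decomposition to be the only nonroutine step; everything else is mechanical.

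Combining the two cases, $s\mapsto Q^n(s,u)$ is increasing in $\delta$ and in $\tau$ for every $u$, hence so is $\tilde V^n$ and thus $V^n$, which closes the induction. Since each iterate is finite-valued ($f$ is finite at every finite argument), all the manipulations above are legitimate, and letting $n\to\infty$ carries both properties over to $V$.
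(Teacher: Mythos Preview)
Your proposal is correct and follows essentially the same approach as the paper's proof: induction on the RVI iterates, showing each $Q^n(\cdot,u)$ is coordinatewise increasing, with the only nontrivial step being the transmit action in part~(ii), handled via the identical three-term decomposition $q(b_1-a_1)+(1-q)(b_2-a_2)+(p-q)(a_2-a_1)$. The only cosmetic difference is that you run a joint induction on both properties simultaneously (needed since the $u=1$ case of~(ii) uses $a_1\le a_2$, i.e., monotonicity in $\delta$), whereas the paper first establishes~(i) for all $n$ and then invokes it inside the separate induction for~(ii).
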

\begin{proof}
See Appendix~\ref{proof:proposition-monotonicity-V}.
\end{proof}

These properties align with the intuition that starting with fresh information, or with a good channel, yields a lower minimum cost than starting with outdated information, or with a poor channel. As channel quality deteriorates and information ages, it can become beneficial to take more aggressive actions, such as increasing transmission frequency or restoring channel quality. To formalize these properties, we introduce the following definitions.

\begin{definition}[AoI-monotone]
    A policy $\pi$ is AoI-monotone if, for any fixed $\tau$, it satisfies $\pi(\tau, \delta^{\prime}) \geq \pi(\tau, \delta)$ for all $\delta^{\prime} \geq \delta$.
\end{definition}
\begin{definition}[AoC-monotone]\label{def:AoC-monotone}
    A policy $\pi$ is AoC-monotone if, for any fixed $\delta$, it satisfies $\pi(\tau^{\prime}, \delta) \geq \pi(\tau, \delta)$ for all $\tau^{\prime} \geq \tau$.
\end{definition}

Establishing structural results aligned with these definitions requires the submodularity property of the Q-factor.
\begin{definition}[Submodularity]\label{definition:submularity} A function $g(x, y)$ is submodular on $\mathcal{X}\times\mathcal{Y}$, if for all $x^{\prime} \geq x$ and $y^{\prime} \geq y$,
\begin{equation*}
    g(x^{\prime}, y^{\prime}) + g(x, y) \leq g(x^{\prime}, y)+g(x, y^{\prime})
\end{equation*}
If the inequality is reversed, $g(x,y)$ is called supermodular.
\end{definition}

\begin{lemma}[{\hspace{-0.03em}\cite[Lemma~4.7.1]{puterman1994markov}}]
\label{lemma:submodularity}
If the Q-factor $Q(s, u)$ is submodular in $(\tau, u)$ for each fixed $\delta$, then
\begin{equation*}
    \pi^{*}(\tau, \delta) = \argmin_{u\in\mathcal{U}}Q(s, u)
\end{equation*}
is AoC-monotone. Similarly, if the Q-factor $Q(s, u)$ is submodular in $(\delta,u)$ for each fixed $\tau$, then $\pi^{*}(\tau, \delta)$ is AoI-monotone.
\end{lemma}

This lemma can be leveraged to conclude about the monotonicity of the optimal policy owing to the following result.

\begin{theorem}\label{theorem:submodular-reduced-space}
The optimal policy has the following monotonicity properties:
\begin{enumerate}
    \item[i.] $Q(s, u)$ is submodular in $(\tau, u)$ for $u\in\mathcal{U}$; that is, the optimal policy is AoC-monotone.
    \item[ii.] $Q(s, u)$ is submodular in $(\delta, \underline{u})$ for $\underline{u}\in\{0, 1\}$. 
\end{enumerate}
\end{theorem}
\begin{proof}
    See Appendix~\ref{proof:submodularity-Q}.
\end{proof}

\begin{figure}[t!]
    \centering
    \includegraphics[width=\linewidth]{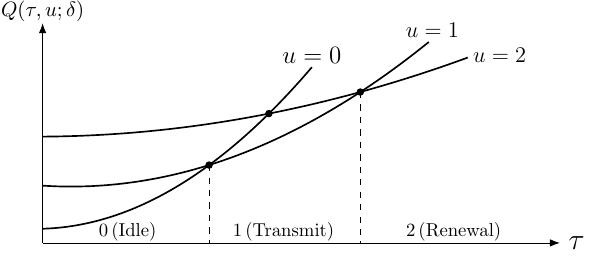}
    \caption{An illustration of AoC-monotonicity of the optimal policy.}
    \label{fig:monotone-policy}
\end{figure}

The AoC-monotonicity of an optimal policy is illustrated in Fig.~\ref{fig:monotone-policy}. For each $u$, the Q-factor $Q(\tau, u;\delta)$ represents the cost of taking an action $u$ in the current decision epoch and following the optimal policy onwards. The optimal action $u^{*}$ is the one that minimizes the Q-factor, and the value function is obtained as $V(s) = Q(s,u^{*})$.

Note, however, that the optimal policy is not necessarily AoI-monotone. This is due to the possibility of taking renewal actions. To see this, let $u^{\prime}=2$ and $u \in \{0, 1\}$. We have
\begin{align*}
    c(\delta^{\prime}, u^{\prime};\tau)
    & + c(\delta, u;\tau) - c(\delta^{\prime}, u;\tau)-c(\delta, u^{\prime};\tau)\\
    &=\frac{1}{\delta_{\textrm{R}}} \sum_{i=1}^{\delta_{\textrm{R}}-1}
    \left(f(\delta^{\prime} + i) - f(\delta+i)\right) \geq 0
\end{align*}
for all $\delta^{\prime} \geq \delta$, which implies supermodularity of the cost function $c(\delta, u;\tau)$. Therefore, we cannot say anything specific about the Q-factor $Q(\delta, u;\tau)$, as it combines both submodular and supermodular components. 

\subsection{Structured-Aware Algorithms}\label{sec:algorithm}
This section exploits the AoC-monotonicity of the optimal policy to develop a \emph{Structured Policy Iteration} (SPI) algorithm with reduced computational complexity. The key idea is that, for each AoI value $\delta$, instead of determining the optimal action for every AoC value $\tau$, it suffices to identify two threshold values: one for transmission and one for renewal. The SPI algorithm proceeds as follows:
\begin{itemize}
    \item[(1)] \textit{Initialization:} Select an initial policy $\pi^{0}$, a reference state $s_{\textrm{ref}}$. 
    \item[(2)] \textit{Policy Evaluation:} For each $n = 0, 1, 2, \ldots$, obtain $\lambda^{n}$ and $V^{n}$ by solving the Bellman equation
    \begin{equation*}
        \lambda^{n}
        + V^{n}(s) = c(s, \pi^{n}(s)) + \sum\nolimits_{s^{\prime}} P_{s,s^{\prime}}(\pi^{n}(s))V^n(s^{\prime})
    \end{equation*}
    for all $s \in \mathcal{S}_N$ such that $V^{n}(s_{\textrm{ref}}) = 0$.
    \item[(3)] \textit{Policy Improvement:} For each AoI value $\delta$, $1\leq \delta\leq N$, update $\pi^{n+1}(\tau, \delta)$ in the increasing order of the AoC $\tau$:
    \begin{itemize}
        \item[a.] Initialize $s = (\tau, \delta)$ with $\tau = 1$.
        \item[b.] Update $\pi^{n+1}(s)$ as
        \begin{equation*}
            \pi^{n+1}(s)= \argmin_{u\in\mathcal{U}}=
            \left[c(s, u) + \sum\nolimits_{s^{\prime}} P_{s,s^{\prime}}(u)V^{n}(s^{\prime})\right].
        \end{equation*}
        \item[c.] If $\pi^{n+1}(s)=1$, then the optimal action for all subsequent states $s^{\prime} = (\tau^{\prime}, \delta)$ with $\tau^{\prime}>\tau$ is either to transmit or to restore. Moreover, if $\pi^{n+1}(s)=2$, then $\pi^{n+1}(s^{\prime})=2$ for all $\tau^{\prime}>\tau$.
    \end{itemize}
    \item[(4)] \textit{Stopping Criterion:} If $\pi^{n+1} = \pi^{n}$, the algorithm terminates with $\lambda^{*} = \lambda^{n}$ and $\pi^{*} = \pi^{n}$; otherwise increase $n=n+1$ and return to Step~(2).
\end{itemize}

The worst-case per-iteration time complexity of SPI is $\mathcal{O}(N^{2}|\mathcal{U}|)$, which is significantly lower than the $\mathcal{O}(|\mathcal{S}_N|^{2}| \mathcal{U}|)$ time complexity of the unstructured RVI algorithm. However, Step~(3) can still be time-consuming, since the search must continue until the renewal threshold is identified.

We next propose a suboptimal \emph{modified SPI} algorithm with a reduced time complexity of $\mathcal{O}(N^{2})$. The modified SPI operates as follows. First, the channel is renewed once its quality degrades below a threshold $\theta_{\textrm{th}}$, or equivalently, when the AoC reaches a threshold $\tau_{\max}$ such that $\theta_{\textrm{th}} = \theta(\tau_{\max})$. Then, in Step~(3) of SPI, determine a transmission threshold $\tau_{\delta, \textrm{th}}$ for each given $\delta$, and set $\pi^{n+1}(\tau, \delta) = 1$ for all $\tau_{\delta, \textrm{th}} < \tau < \tau_{\max}$ without further computation.

\begin{figure*}[t!]
\centering
\subfloat[SPI and RVI ($\alpha = 0.05$)]{
\includegraphics[width=0.31\linewidth]{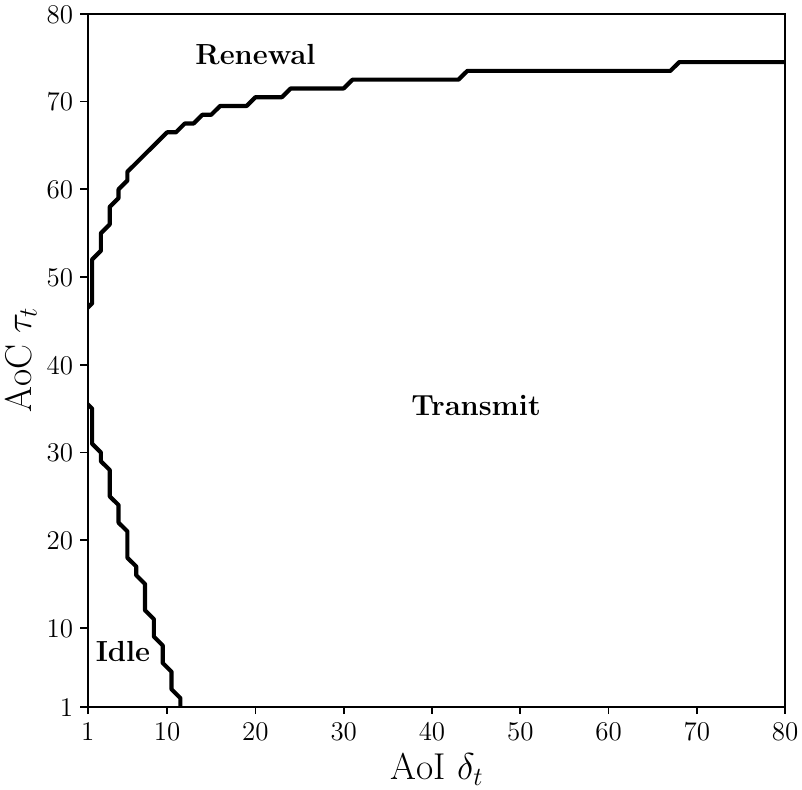}
\label{fig:spi}}
\subfloat[modified SPI ($\alpha = 0.05$)]{
\includegraphics[width=0.31\linewidth]{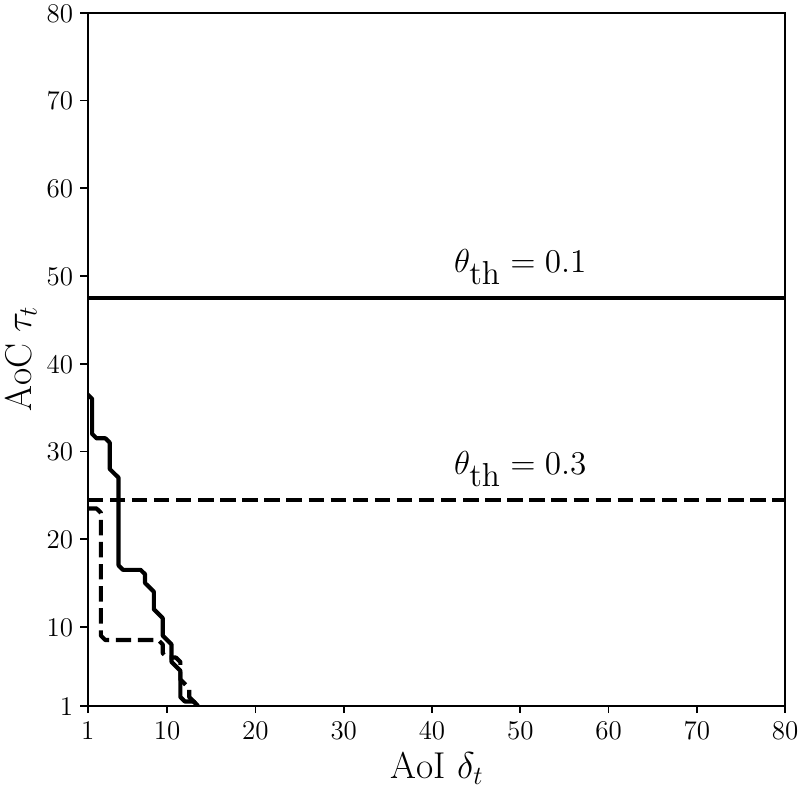}
\label{fig:modified SPI}
}
\subfloat[Performance comparison ($\theta_\text{th} = 0.1$)]{
\centering
\includegraphics[width=0.365\linewidth]{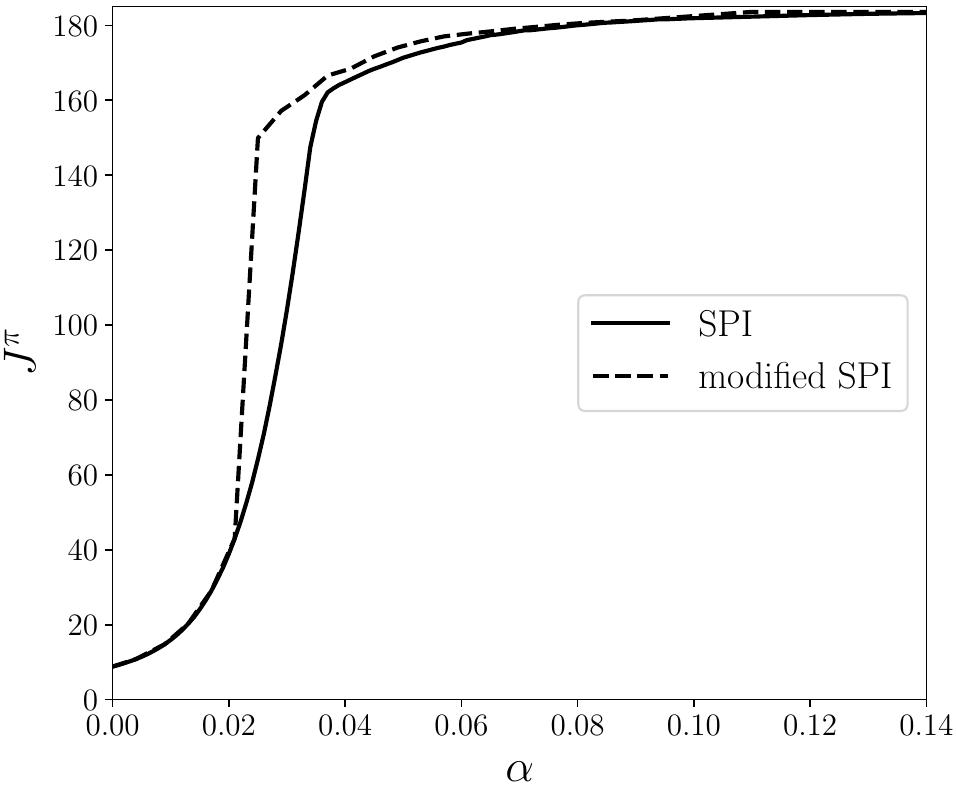}
\label{fig:performance comparison}
}
\caption{Behavior and performance comparisons of the proposed algorithms.}
\label{fig:algorithms}
\end{figure*}

\section{Numerical Example}\label{sec:numerical-results}
We examine a remote estimation system with parameters
\begin{equation*}
    \mathbf{A} = \begin{bmatrix}
        1.1 & 0.5\\
        0 & 0.8
    \end{bmatrix}, \mathbf{Q} = \begin{bmatrix}
        1 & 0\\
        0 & 1
    \end{bmatrix}, \mathbf{C} = [1, 1], ~\mathbf{R} = 1.
\end{equation*}
The channel is modeled as an exponential wearing process
\begin{equation*}
    \theta(\tau) = (\theta_{\max}-\theta_{\min}) e^{- \alpha \tau} + \theta_{\min},
\end{equation*}
where $\alpha \geq 0$ is the decay rate. The case $\alpha = 0$ corresponds to an \emph{i.i.d.} channel with constant success probability $\theta_{\max}$. For $\alpha > 0$, the channel reliability $\theta(\tau)$ decreases from $\theta_{\max}$ to $\theta_{\min}$ as the AoC goes to infinity. Larger values of $\alpha$ correspond to faster channel deterioration. In the simulations, we set $\theta_{\max} = 0.99$ and $\theta_{\min} = 0$. The transmission and renewal energy consumptions are $E_{\textrm{T}} = 10$ and $E_{\textrm{R}} = 100$, respectively. Each transmission incurs $\tau_{\textrm{D}} = 15$ units of channel wear, each renewal action takes $\delta_{\textrm{R}} = 30$ time slots, and the age processes are truncated at $N = 100$ when implementing the RVI, SPI, and modified SPI algorithms. When testing performance, we take the average cost over $10^{7}$ time slots.

Fig.~\ref{fig:spi} and Fig.~\ref{fig:modified SPI} compare the behavior of the policies obtained by different algorithms. It is observed that the optimal policy obtained by SPI (and RVI) is AoC-monotone but not AoI-monotone. This aligns with the intuition that, when the channel condition is good and the information at the receiver is relatively fresh, the sensor may remain idle to save energy. In contrast, as the channel quality deteriorates and the information becomes stale, the sensor takes more aggressive actions to improve both estimation quality and channel reliability. On the other hand, the modified SPI specifies a constant renewal threshold $\theta_\text{th}$ for all AoI values. As $\theta_\text{th}$ increases, the worst channel condition improves, and consequently, the sensor transmits more frequently to improve estimation quality.

Fig.~\ref{fig:performance comparison} shows the estimation performance achieved by SPI and modified SPI under different decay rates $\alpha$. The modified SPI achieves near-optimal performance for both slowly and rapidly wearing channels. However, noticeable performance degradation occurs for moderate values of $\alpha$, where the timing of channel renewal has a significant impact on system performance. This highlights the need for careful calibration of renewal thresholds in such cases.

\section{Conclusions}\label{sec:conclusion}
We studied optimal transmission scheduling for remote estimation over a wearing channel whose reliability deteriorates with time and usage. This setting creates a fundamental tradeoff between information freshness and channel aging. We formulated the problem as an SMDP characterized by two dependent age processes. We established the existence of an AoC-monotone optimal policy and developed a structured policy iteration algorithm to determine the optimal timing of data transmission and channel renewal. Numerical results validated the proposed structural findings and algorithmic design.

\section*{Acknowledgment}
The authors would like to thank the anonymous reviewers for their valuable feedback, in particular the reviewer who pointed out an error in the initial submission.

\appendices

\section{Proof of Theorem~\ref{theorem:existence}}\label{proof:theorem-existence} 
The equivalence between the SMDP $(\mathcal{S}, \mathcal{U}, \tilde{P}, \tilde{c}, \ell)$ and the MDP $(\mathcal{S}, \mathcal{U}, P, c)$ follows from the following result. 

\begin{lemma}[{\hspace{-0.03em}\cite[Proposition~11.4.5]{puterman1994markov}}]
Let $\gamma$ be a scalar such that $0 < \gamma < \ell(u) / (1-\tilde{P}_{s,s}(u))$ for all $s\in\mathcal{S}$ and $u\in\mathcal{U}$ for which $\tilde{P}_{s,s}(u) < 1$. Define for all $s$ and $u$,
\begin{equation*}
    c(s, u) = \frac{\tilde{c}(s,u)}{\ell(u)},\,P_{s,s^{\prime}}(u) = \begin{cases}
        \frac{\gamma \tilde{P}_{s,s^{\prime}}(u)}{\ell(u)}, &s \neq s^{\prime},\\
        1-\frac{\gamma \left(1-\tilde{P}_{s,s}(u) \right)}{\ell(u)}, &s=s^{\prime}.
    \end{cases}
\end{equation*}
If there exists a scalar $\lambda^{*}$ and a vector $V$ such that for all $s$,
\begin{equation*}
    V(s) = \min_{u\in \mathcal{U}} \left[c(s,u) - \lambda^{*} + \sum\nolimits_{s^{\prime}} P_{s,s^{\prime}}(u)V(s^{\prime})
    \right],
\end{equation*}
then $(\lambda^{*}, \tilde{V})$ satisfies $\tilde{V}= \gamma V$ and
\begin{equation*}
    \tilde{V}(s) = \min_{u} \left[
    \tilde{c}(s,u) - \lambda^{*} \ell(u) + \sum\nolimits_{s^{\prime}} \tilde{P}_{s,s^{\prime}}(u) \tilde{V}(s^{\prime}) \right].
\end{equation*}
\end{lemma}

Setting $\gamma=1$ proves the equivalence result in Theorem~\ref{theorem:existence}. 

The key step is then to show that there exists a stationary policy $\pi$ such that $V(s)$ is finite for all $s$. We rely on the vanishing discount approach~\cite{Onesimo1996MDP} to establish this result. The $\alpha$-discounted Bellman equation is defined as
\begin{align}
     V_{\alpha}(s) := \min_{u} \left[
     c(s, u) + \alpha \sum\nolimits_{s^{\prime}}P_{s,s^{\prime}}(u) V_\alpha(s^{\prime})
     \right]. \label{eq:discounted-bellman-eq}
\end{align}
For a reference state $s_{\textrm{ref}} \in \mathcal{S}$, define
$h_{\alpha}(s) = V_{\alpha}(s) - V_{\alpha}(s_{\textrm{ref}})$. Then,
\begin{equation*}
    (1 -\alpha) V_{\alpha}(s_{\textrm{ref}}) + h_{\alpha}(s) =
    \min_{u} \left[
     c(s, u) + \alpha \sum\nolimits_{s^{\prime}} P_{s,s^{\prime}}(u) h_{\alpha}(s^{\prime})
     \right].
\end{equation*}
Taking the limit as $\alpha\uparrow 1$, and assuming that the limits of all terms in the above equation exist, the Bellman equation for the original average-cost problem satisfies
\begin{equation}
    \lambda^{*} = \lim_{\alpha \uparrow 1}(1-\alpha)V_{\alpha}(s_{\textrm{ref}}), \quad V(s) = \lim_{\alpha \uparrow 1} h_{\alpha}(s).\label{eq:limits}
\end{equation}

To establish the convergence of the limits in~\eqref{eq:limits}, it suffices to verify the following conditions~\cite[Theorem~5.5.4]{Onesimo1996MDP}:
\begin{itemize}
    \item[i.] $V_{\alpha}(s) < \infty$ for every $s \in \mathcal{S}$ and $\alpha \in (0, 1)$.
    \item[ii.] There exists a reference state $s_{\textrm{ref}}$, constants $M \geq 0$ and $\underline{\alpha}\in (0, 1)$, and a nonnegative function $b(s)$ such that $- M \leq h_{\alpha}(s) \leq b(s)$ for all $s \in \mathcal{S}$ and $\alpha \in[\underline{\alpha}, 1)$.
    \item[iii.] $b(s)$ satisfies $\sum_{s^{\prime}} b(s) P_{s,s^{\prime}}(u) < \infty$ for all $s$ and $u$.
\end{itemize}

Condition~(i) ensures the existence of a unique solution to~\eqref{eq:discounted-bellman-eq}, while Conditions~(ii)--(iii) guarantee the convergence of the limits in~\eqref{eq:limits}. We will use the following lemma.

\begin{lemma}[{\hspace{-0.03em}
\cite[Lemma~5.3.1]{Onesimo1996MDP}}]\label{lemma:discounted-value}
The $\alpha$-discounted value function satisfies for every $s$, $\lim_{\alpha \uparrow 1}(1 - \alpha) V_{\alpha}(s) \leq J^{*}(s).$
\end{lemma}

This lemma states that the normalized $\alpha$-discounted value function provides a lower bound on the average-cost criterion. Together with Lemma~\ref{lemma:with-renewal}, it guarantees that Conditions~(i)--(iii) hold.

\section{Proof of Proposition~\ref{proposition:monotonicity-of-V}}\label{proof:proposition-monotonicity-V}
Recall that $V(s) = \lim_{n \to \infty}V^{n}(s)$, where the sequence $\{V^{n}\}$ is generated by the RVI recursion~\eqref{eq:rvi}. This suggests that we can prove part (i) by induction. Choose $V^{0}(\tau, \delta)$ to be increasing in $\delta$, i.e.,
\begin{equation*}
    V^{0}(\tau, \delta) \leq V^{0}(\tau, \delta^{\prime}),~ \forall \delta \leq \delta^{\prime}.
\end{equation*}

Let us now assume (the induction hypothesis) that $V^{n}(\tau, \delta)$ is increasing in $\delta$ for all $\tau$. We will show that $V^{n+1}(\tau, \delta)$ is increasing in $\delta$. From the RVI recursion~\eqref{eq:rvi}, we have
\begin{equation*}
    V^{n+1}(s) = \min_{u}\left[Q^{n+1}(s,u)\right] - \min_{u}\left[Q^{n+1}(s_{\textrm{ref}},u)\right].
\end{equation*}
Hence, it suffices to show that $Q^{n+1}(s, u)$ is increasing in $\delta$ for all $u$. We establish this as follows. 

If $u=0$, the system jumps into $(\tau + 1, \delta + 1)$ with certainty. Then
\begin{equation*}
    Q^{n+1}(\delta, 0;\tau) = f(\delta) + V^{n}(\tau + 1, \delta + 1),
\end{equation*}
which by Lemma~\ref{lemma:monotonicity-of-AoI} and the induction hypothesis gives
\begin{align}
    Q^{n+1}(\delta, 0;\tau) 
    &\leq f(\delta^{\prime}) + V^{n}(\tau + 1, \delta^{\prime} + 1) \notag\\
    &=Q^{n+1}(\delta^{\prime}, 0;\tau).\label{eq:prove-u0}
\end{align}
If $u=1$, the system can either transition to state $(\tau + \tau_{\textrm{D}}, 1)$ w.p. $\theta(\tau)$ or to state $(\tau + \tau_{\textrm{D}}, \delta + 1)$ w.p. $\bar{\theta}(\tau)$. Then, 
\begin{align}
      &Q^{n+1}(\delta, 1;\tau) \notag\\
    =~& f(\delta) + E_{\textrm{T}} + \theta(\tau) V^{n}(\tau + \tau_{\textrm{D}}, 1) + \bar{\theta}(\tau) V^{n}(\tau + \tau_{\textrm{D}}, \delta + 1) \notag\\
    \leq~& f(\delta^{\prime}) + E_{\textrm{T}} + \theta(\tau) V^{n}(\tau + \tau_{\textrm{D}}, 1) +
    \bar{\theta}(\tau) V^{n}(\tau +
    \tau_{\textrm{D}}, \delta^{\prime} + 1) \notag\\
    =~& Q^{n+1}(\delta^{\prime}, 1;\tau) \leq 0.\label{eq:prove-u1}
\end{align}
If $u=2$, the system moves to $(1, \delta + \delta_{\textrm{R}})$ w.p. $\frac{1}{\delta_{\textrm{R}}}$ or remains in $(\tau, \delta)$ w.p. $1-\frac{1}{\delta_{\textrm{R}}}$. Thus, we have
\begin{align}
    &Q^{n+1}(\delta, 2;\tau) \notag\\
    =& \frac{\sum_{i=0}^{\delta_{\textrm{R}}-1} f(\delta+i) + E_{\textrm{R}}}{\delta_\textrm{R}} + \frac{V^{n}(1, \delta + \delta_{\textrm{R}})}{\delta_\textrm{R}}+\frac{(\delta_{\textrm{R}}-1)V^{n}(\tau, \delta)}{\delta_\textrm{R}} \notag\\
    \leq& \frac{\sum_{i=0}^{\delta_\textrm{R}-1} f(\delta^{\prime} +i) + E_\textrm{R}}{\delta_\textrm{R}} + \frac{V^{n}(1, \delta^{\prime
    } + \delta_\textrm{R})}{\delta_\textrm{R}}+\frac{(\delta_\textrm{R} - 1)V^{n}(\tau, \delta^\prime)}{\delta_\textrm{R}} \notag\\
    =& Q^{n+1}(\delta^{\prime}, 2;\tau). \label{eq:prove-u2}
\end{align}
It follows from~\eqref{eq:prove-u0}--\eqref{eq:prove-u2} that $V^{n+1}(s)$ is increasing in $\delta$.

We adopt a similar induction procedure to prove part (ii). Suppose that $V^{n}(\tau, \delta)$ is increasing in $\tau$ for all $\delta$. We need to show that $Q^{n+1}(\tau, u;\delta)$ is increasing in $\tau$ for all $u$. Obviously, $Q^{n+1}(\tau, 0;\delta)$ and $Q^{n+1}(\tau, 2;\delta)$ are increasing in $\tau$. For $u=1$, we have
\begin{align*}
    &Q^{n+1}(\tau^{\prime}, 1;\delta) - Q^{n+1}(\tau, 1;\delta)  \\
    =~& \theta(\tau^{\prime})V^{n}(\tau^{\prime} + \tau_{\textrm{D}}, 1) - \theta(\tau)V^{n}(\tau + \tau_{\textrm{D}}, 1)  \\
    &+ \bar{\theta}(\tau^{\prime})V^{n}(\tau^{\prime} + \tau_{\textrm{D}}, \delta + 1) - \bar{\theta}(\tau)V^{n}(\tau + \tau_{\textrm{D}}, \delta+ 1)\notag\\
    =~& \theta(\tau^{\prime})\big(
    \underbrace{V^{n}(\tau^{\prime} + \tau_{\textrm{D}}, 1) - V^{n}(\tau + \tau_{\textrm{D}}, 1)}_{\geq 0}
    \big) \\
    \quad& + \bar{\theta}(\tau^{\prime})\big(
    \underbrace{V^{n}(\tau^{\prime} + \tau_{\textrm{D}}, \delta + 1) - V^{n}(\tau + \tau_{\textrm{D}}, \delta + 1)}_{\geq 0}
    \big) \\
    &+ (\underbrace{
    \theta(\tau^{\prime}) -  \theta(\tau)}_{\leq 0})
     \big(
     \underbrace{V^{n}(\tau + \tau_{\textrm{D}}, 1) - V^{n}(\tau + \tau_{\textrm{D}}, \delta + 1 )}_{\leq 0}
     \big) 
\end{align*}
for all $\tau^{\prime} \geq \tau$, which concludes our inductive proof.

\section{Proof of Theorem~\ref{theorem:submodular-reduced-space}}\label{proof:submodularity-Q}
We first show that $Q(\delta,u;\tau)$ is submodular in $(\delta, u)$ for $u\in\{0, 1\}$. By Definition~\ref{definition:submularity}, we need to verify that
\begin{equation}
    Q(\delta^{\prime}, 1;\tau) + Q(\delta, 0;\tau) - 
    Q(\delta^{\prime}, 0;\tau) -Q(\delta, 1;\tau) \leq 0, \label{eq:prove-submodularity-init}
\end{equation}
Substituting
\begin{align*}
    Q(\delta, 0;\tau) &= f(\delta) + V(\tau + 1, \delta+1),\\
    Q(\delta, 1;\tau) &= f(\delta) \hspace{-0.2em}+\hspace{-0.2em} E_{\textrm{T}} \hspace{-0.2em}+\hspace{-0.2em} \theta(\tau)V(\tau \hspace{-0.2em}+\hspace{-0.2em} \tau_{\textrm{D}}, 1) \hspace{-0.2em}+\hspace{-0.2em} \bar{\theta}(\tau)V(\tau \hspace{-0.2em}+\hspace{-0.2em} \tau_{\textrm{D}}, \delta\hspace{-0.2em}+\hspace{-0.2em}1)
\end{align*}
into~\eqref{eq:prove-submodularity-init} gives the following inequality
\begin{align}
    \psi(\tau, \delta, \delta^{\prime}) :=~& \bar{\theta}(\tau) \big(
        V(\tau+\tau_{\textrm{D}}, \delta^{\prime}) - V(\tau+\tau_{\textrm{D}}, \delta)
    \big) \notag\\
    &-\big(V(\tau+1, \delta^{\prime}) - V(\tau+1, \delta)\big) \leq 0. \label{eq:prove-submodularity-condition}
\end{align}
By Proposition~\ref{proposition:monotonicity-of-V}, $V(\tau, \delta)$ is increasing in $\delta$ and $\tau$. So it is not immediately clear whether the inequality in \eqref{eq:prove-submodularity-condition} holds in general. Next, we will derive an upper bound of $\psi(\tau, \delta, \delta^{\prime})$, denoted as $\Psi(\tau, \delta, \delta^{\prime})$, and show that $\Psi(\tau, \delta, \delta^{\prime})\leq 0$. 

The upper bound is derived as follows. Let $u_{1}^{*}, u_{2}^{*} \in \{0, 1\}$ denote the optimal actions such that
\begin{align*}
    V(\tau+\tau_{\textrm{D}}, \delta) 
    &=\min_{u} Q(\delta, u;\tau+\tau_{\textrm{D}}) = Q(\delta, u_{1}^{*};\tau+\tau_{\textrm{D}}),\\
    V(\tau+1, \delta^{\prime}) &=\min_{u} Q(\delta, u;\tau+1) = Q(\delta^{\prime}, u_{2}^{*}; \tau+1). 
\end{align*}
Letting $u_{0} = u_{1}^{*} $ and $u_{3}=u_{2}^{*}$ gives that
\begin{align*}
    V(\tau+\tau_{\textrm{D}}, \delta^{\prime}) 
    &= \min_{u} Q(\delta^{\prime}, u;\tau+\tau_{\textrm{D}}) 
    \leq Q(\delta^{\prime}, u_{0};\tau+\tau_{\textrm{D}}),\\
    V(\tau+1, \delta) &= \min_{u} Q(\delta, u;\tau+1) \leq Q(\delta, u_{3};\tau+1).
\end{align*}
Substituting the above inequalities into~\eqref{eq:prove-submodularity-condition} yields
\begin{align}
    \psi(\tau, \delta, \delta^{\prime}) &\leq \bar{\theta}(\tau) \big(
        Q(\delta^{\prime}, u_{0};\tau+\tau_{\textrm{D}}) - Q(\delta, u_{1}^{*};\tau+\tau_{\textrm{D}}) \big) \notag\\
    &\quad\quad-\big(Q(\delta^{\prime}, u_{2}^{*};\tau+1) - Q(\delta, u_{3};\tau+1)\big)\notag\\
    &=: \Psi(\tau, \delta, \delta^{\prime}). \label{eq:prove-structure-1}
\end{align}
The proof proceeds by induction. Assume that $V^{n}(\tau, \delta)$ satisfies~\eqref{eq:prove-submodularity-condition} for all $\tau$ and $\delta^{\prime} \geq \delta$. This is equivalent to assume that $Q^{n+1}(\delta,u;\tau)$ is submodular at iteration $n+1$. We then show that $\Psi^{n+1}(\tau, \delta, \delta^{\prime})\leq 0$ for all $\tau$ and $\delta^{\prime} \geq \delta$.

The proof is divided into four cases: (1) $u_{1}^{*} = u_{2}^{*} = 0$; (2) $u_{1}^{*} = 1$, $u_{2}^{*} = 0$; (3) $u_{1}^{*} = 0$, $u_{2}^{*} = 1$; and (4) $u_{1}^{*} = 1$, $u_{2}^{*} = 1$. 

Case 1): When $u_{1}^{*} = u_{2}^{*} = 0$, we have
\begin{align*}
    &Q^{n+1}(\delta^{\prime}, u_{0}; \tau+\tau_{\textrm{D}}) - Q^{n+1}(\delta, u_{1}^{*};\tau + \tau_{\textrm{D}}) \notag\\
    =& f(\delta^{\prime}) \hspace{-0.2em}-\hspace{-0.2em} f(\delta)
    \hspace{-0.2em}+\hspace{-0.2em} V^{n}(\tau \hspace{-0.2em}+\hspace{-0.2em} \tau_{\textrm{D}}\hspace{-0.2em} +\hspace{-0.2em} 1, \delta^{\prime} \hspace{-0.2em}+\hspace{-0.2em}1) \hspace{-0.2em}-\hspace{-0.2em}  V^{n}(\tau \hspace{-0.2em}+\hspace{-0.2em}\tau_{\textrm{D}} \hspace{-0.2em}+\hspace{-0.2em} 1, \delta \hspace{-0.2em}+\hspace{-0.2em} 1),\\
    &Q^{n+1}(\delta^{\prime}, u_{2}^{*};\tau+1) - Q^{n+1}(\delta, u_{3};\tau+1) \\
    =& f(\delta^{\prime}) - f(\delta) + V^{n}(\tau+2, \delta^{\prime}+1) -  V^{n}(\tau+2, \delta+1).
\end{align*}
Let $\tau_{0} = \tau + 1$. Substituting the above equalities into~\eqref{eq:prove-structure-1} gives
\begin{align}
    \Psi^{n+1}&(\tau, \delta, \delta^{\prime}) 
    = -\theta(\tau) \big( f(\delta^{\prime}) - f(\delta)\big) \notag\\
    &+ \Big( \bar{\theta}(\tau) \big(
        V^{n}(\tau_0+\tau_{\textrm{D}}, \delta^{\prime} + 1) -  V^{n}(\tau_0 + \tau_{\textrm{D}}, \delta + 1)
    \big) \notag\\
    &- \big( V^{n} (\tau_{0} + 1, \delta^{\prime} +1) -  V^{n}(\tau_{0} + 1, \delta + 1)
    \big)\Big)\notag\\
    &\overset{(a)}{\leq} -\theta(\tau) \big(f(\delta^{\prime}) - f(\delta) \big) \notag\\
    &+\Big(\bar{\theta}(\tau_{0}) \big(
        V^{n}(\tau_{0} +\tau_{\textrm{D}}, \delta^{\prime} + 1) - V^{n}(\tau_{0}+ \tau_{\textrm{D}}, \delta + 1)
    \big) \notag\\
    &-\big(
        V^{n}(\tau_{0} + 1, \delta^{\prime} + 1) -  V^{n}(\tau_{0} + 1, \delta + 1)
    \big)\Big) \notag \\
    &= - \theta(\tau) \big(f(\delta^{\prime}) -f(\delta)\big) + \psi^{n}(\tau_{0}, \delta, \delta^{\prime}) \overset{(b)}{\leq} 0, \label{eq:prove-thm-1}
\end{align}
where $(a)$ follows because $\bar{\theta}(\tau)$ is increasing in $\tau$, and $(b)$ follows form the induction hypothesis.

Case 2): $u_{1}^{*} = 1, u_{2}^{*} = 0$. We may write
\begin{align}
    \hspace{-0.4em}\Psi^{n+1}(\tau, \delta, \delta^{\prime}) =& -\theta(\tau) \big(f(\delta^{\prime}) - f(\delta)\big) + \bar{\theta}(\tau) \bar{\theta}(\tau+\tau_{\textrm{D}})\notag\\
    &\times \big(
    V^{n}(\tau+2\tau_{\textrm{D}}, \delta^{\prime} \hspace{-0.2em}+\hspace{-0.2em} 1) \hspace{-0.2em}-\hspace{-0.2em}  V^{n}(\tau\hspace{-0.2em}+ \hspace{-0.2em} 2\tau_{\textrm{D}}, \delta\hspace{-0.2em}+\hspace{-0.2em} 1) \big)\notag\\
    &-\big( V^{n}(\tau\hspace{-0.2em}+\hspace{-0.2em} 2, \delta^{\prime} \hspace{-0.2em} +\hspace{-0.2em} 1) \hspace{-0.2em}-\hspace{-0.2em}  V^{n}(\tau\hspace{-0.2em}+\hspace{-0.2em} 2, \delta\hspace{-0.2em}+\hspace{-0.2em} 1)
    \big).\notag
\end{align}
Adding an auxiliary term on the right-hand side, we obtain
\begin{align}
    &\Psi^{n+1}(\tau, \delta, \delta^\prime) = -\theta(\tau) \big(f(\delta^\prime) - f(\delta)\big) \notag\\
    &\quad+\Big( \bar{\theta}(\tau) \bar{\theta}(\tau\hspace{-0.2em}+\hspace{-0.2em}\tau_\textrm{D})\big(
        V^{n}(\tau\hspace{-0.2em}+\hspace{-0.2em}2\tau_\textrm{D}, \delta^\prime\hspace{-0.2em}+\hspace{-0.2em}1) \hspace{-0.2em}-\hspace{-0.2em}  V^{n}(\tau\hspace{-0.2em}+\hspace{-0.2em}2\tau_\textrm{D}, \delta\hspace{-0.2em}+\hspace{-0.2em}1)
    \big)\notag\\
    &\quad-\bar{\theta}(\tau) \big(
        V^{n}(\tau\hspace{-0.2em}+\hspace{-0.2em}\tau_\textrm{D}\hspace{-0.2em}+\hspace{-0.2em}1, \delta^\prime\hspace{-0.2em}+\hspace{-0.2em}1) \hspace{-0.2em}-\hspace{-0.2em}  V^{n}(\tau\hspace{-0.2em}+\hspace{-0.2em}\tau_\textrm{D}\hspace{-0.2em}+\hspace{-0.2em}1, \delta\hspace{-0.2em}+\hspace{-0.2em}1)
    \big)\Big)\notag\\
    & \quad + \Big(\bar{\theta}(\tau) \big(
        V^{n}(\tau\hspace{-0.2em}+\hspace{-0.2em}\tau_\textrm{D}\hspace{-0.2em}+\hspace{-0.2em}1, \delta^\prime\hspace{-0.2em}+\hspace{-0.2em}1) \hspace{-0.2em}-\hspace{-0.2em}  V^{n}(\tau\hspace{-0.2em}+\hspace{-0.2em}\tau_\textrm{D}\hspace{-0.2em}+\hspace{-0.2em}1, \delta\hspace{-0.2em}+\hspace{-0.2em}1)
    \big)\notag\\
    &\quad-\big(
        V^{n}(\tau+2, \delta^\prime+1) -  V^{n}(\tau+2, \delta+1)
    \big)\Big).\notag
\end{align}
Letting $\tau_{1} = \tau+\tau_{\textrm{D}}$ and $\tau_{2} = \tau + 1$ gives 
\begin{align}
\Psi^{n+1}(\tau, \delta, \delta^{\prime}) =& -\theta(\tau) \big(f(\delta^{\prime}) - f(\delta)\big) +\bar{\theta}(\tau) \psi^{n}(\tau_{1}, \delta, \delta^{\prime}) \notag\\
&+ \psi^{n}(\tau_{2}, \delta, \delta^{\prime}) \leq 0. \label{eq:prove-thm-2}
\end{align}

Case 3): $u_{0}^{*} = 0, u_{3}^{*} = 1$. This case follows by noting that 
\begin{align}
    &\Psi^{n+1}(\tau, \delta, \delta^{\prime}) = -\theta(\tau) \big(f(\delta^{\prime}) - f(\delta)\big) + \Big(
    \big(\bar{\theta}(\tau) - \bar{\theta}(\tau+1)\big) \notag\\
    &\times \big(V^{n}(\tau + \tau_{\textrm{D}} + 1, \delta^{\prime} + 1) - V^{n}(\tau + \tau_{\textrm{D}} + 1, \delta + 1)
    \big) \Big) \leq 0. \label{eq:prove-thm-3}
\end{align}

Case 4): $u_{0}^{*} = u_{3}^{*} = 1$. Let $\tau_{3} = \tau + \tau_{\textrm{D}}$. Then, we have
\begin{align}
    &\Psi^{n+1}(\tau, \delta, \delta^{\prime}) = - \theta(\tau) \big( f(\delta^{\prime}) \hspace{-0.2em}-\hspace{-0.2em} f(\delta)\big) \notag\\
    &\quad + \Big(\bar{\theta}(\tau) \bar{\theta}(\tau\hspace{-0.2em} +\hspace{-0.2em} \tau_{\textrm{D}}) \big( V^{n}(\tau\hspace{-0.2em}+\hspace{-0.2em}2 \tau_{\textrm{D}}, \delta^{\prime}\hspace{-0.2em}+\hspace{-0.2em} 1) \hspace{-0.2em}-\hspace{-0.2em}  V^{n}(\tau\hspace{-0.2em}+\hspace{-0.2em}2\tau_{\textrm{D}}, \delta\hspace{-0.2em} +\hspace{-0.2em} 1)
    \big) \notag\\
    &\quad - \big(
        V^{n}(\tau \hspace{-0.2em}+\hspace{-0.2em} \tau_{\textrm{D}} \hspace{-0.2em}+\hspace{-0.2em} 1, \delta^{\prime} \hspace{-0.2em}+\hspace{-0.2em} 1) \hspace{-0.2em}-\hspace{-0.2em} V^{n}(\tau \hspace{-0.2em}+\hspace{-0.2em} \tau_{\textrm{D}} \hspace{-0.2em} + \hspace{-0.2em} 1, \delta \hspace{-0.2em} + \hspace{-0.2em} 1)
    \big)\Big)\notag\\
    &\leq - \theta(\tau) \big(f(\delta^{\prime}) - f(\delta)\big) \notag\\
    &\quad + \bar{\theta}(\tau_{3})\big(
        V^{n}(\tau_{3} + \tau_{\textrm{D}}, \delta^{\prime} + 1) -  V^{n}(\tau_{3} + \tau_{\textrm{D}}, \delta + 1)
    \big)\notag\\
    &\quad - \big(
        V^{n}(\tau_{3} + 1, \delta^{\prime} + 1) - V^{n}(\tau_{3} + 1, \delta + 1)
    \big) \notag\\
    &= -\theta(\tau) \big(f(\delta^{\prime}) - f(\delta) \big) + \psi^{n}(\tau_{3}, \delta, \delta^{\prime}) \leq 0. \label{eq:prove-thm-4}
\end{align}

From~\eqref{eq:prove-thm-1}--\eqref{eq:prove-thm-4}, we conclude that $\Psi^{n+1}(\tau, \delta, \delta^{\prime})\leq 0$ holds for all possible $u_{1}^{*}, u_{2}^{*} \in\{0, 1\}$. It follows that $Q(\delta,u;\tau)$ is submodular on $\mathbb{N}^{+} \times\{0, 1\}$. Similarly, one can easily show that $Q(\delta,u;\tau)$ is submodular on $\mathbb{N}^{+}\times\{1, 2\}$, and $Q(\tau,u;\delta)$ is submodular on $\mathbb{N}^{+}\times\{0, 1\}$. Since the objective is to minimize the average MSE, renewing the channel before transmission provides no estimation benefit. Therefore, for any fixed AoI, the optimal policy is weakly increasing in AoC (from idle to transmit to renewal). This completes the proof.

\balance
\section*{References}
\vspace{-2 em}
\bibliographystyle{IEEEtran}
\bibliography{ref}

\end{document}